\documentclass[journal,twoside]{IEEEtran}
\usepackage{amsmath,amsfonts}
\usepackage{algorithm}
\usepackage{algorithmic}
\usepackage{array}
\usepackage[caption=false,font=normalsize,labelfont=sf,textfont=sf]{subfig}
\usepackage{textcomp}
\usepackage{stfloats}
\usepackage{url}
\usepackage{verbatim}
\usepackage[dvipsnames]{xcolor}
\usepackage{multirow}
\usepackage{graphicx}
\usepackage{amsthm}
\usepackage{eucal}
\usepackage{amsmath}
\usepackage{amssymb}
\usepackage{lipsum}
\usepackage{graphicx}
\usepackage{xcolor}
\usepackage{pifont}
\def\BibTeX{{\rm B\kern-.05em{\sc i\kern-.025em b}\kern-.08em
    T\kern-.1667em\lower.7ex\hbox{E}\kern-.125emX}}
\usepackage{balance}
\usepackage{amssymb}
\newcommand{\xmark}{\ding{55}}%
\newtheorem{theorem}{Theorem}
\newtheorem{lemma}{Lemma}
\newtheorem{remark}{Remark}
\begin{document}
\title{\huge Deterministic versus Stochastic Optimization for Joint Path Planning and Dynamic Time Splitting in Multiple-UAV-Cached IoT Networks}
\author{Trinh Van Chien,~\textit{Member, IEEE}, Dinh Thanh Tung, Waqas Khalid, Ngo Cong Dung, Banh Thi Quynh Mai,\\ and Symeon Chatzinotas,~\textit{Fellow, IEEE}
\thanks{Trinh Van Chien, Dinh Thanh Tung, Ngo Cong Dung, and Banh Thi Quynh Mai are with the School of Information and Communications Technology, Hanoi University of Science and Technology, Hanoi 100000, Vietnam (email: chientv@soict.hust.edu.vn). Waqas Khalid is with the Department of Electrical and Electronic Engineering, and the Next Generation Internet of Everything Laboratory (NGIoE Lab), University of Nottingham Ningbo China, Ningbo 315100, China (e-mail: Waqas.Khalid@nottingham.edu.cn). Symeon Chatzinotas is with the with the
Interdisciplinary Centre for Security, Reliability and Trust (SnT), University
of Luxembourg, 1855 Luxembourg, Luxembourg (e-mail: symeon.chatzinotas@uni.lu). This research is funded by Hanoi University of Science
and Technology (HUST) under project number T2024-PC-039. Corresponding author: Banh Thi Quynh Mai.}}

\maketitle

\begin{abstract}
This paper examines wireless-powered Internet of Things (IoT) networks involving multiple unmanned aerial vehicles (UAVs) equipped with backscatter and caching technologies to relay and transmit signals. For data communication and energy harvesting (EH), the source transmits information and power to UAVs using the dynamic time splitting (DTS) method. UAVs use harvested energy for passive communication (backscatter) and for active communication (transmitting information) to the destination. The primary objective is to maximize the total throughput by jointly optimizing the DTS ratio, trajectory, and transmission power, leveraging the UAVs' caching capability. This optimization problem is challenging due to its non-convexity. Therefore, an efficient alternating algorithm using the block coordinate descent (BCD) method is proposed to optimize each variable given the fixed values of the other parameters. By applying the Karush-Kuhn-Tucker (KKT) conditions, we derive a closed-form expression for the optimal DTS ratio, significantly reducing computation time. The optimal values for the other two parameters are determined using the BCD. In order to thoroughly assess the effectiveness of various solutions for the original problem, this paper introduces an approach leveraging a genetic algorithm (GA). The GA in this context employs a one-point crossover method, value mutation, and rank-based selection based on fitness values. Numerical results show that the BCD and GA achieve at least 31\% throughput improvement over the benchmarks, with reduced computational time. These findings demonstrate the performance gain and practical feasibility of our solutions in caching-enabled UAV-aided IoT networks.
\end{abstract}

\begin{IEEEkeywords}
 Unmanned aerial vehicle, wireless-powered communication, block coordinate descent, genetic algorithm
\end{IEEEkeywords}

\section{Introduction}

Unmanned aerial vehicles (UAVs) have become essential in military, civilian, and commercial domains, supporting surveillance, disaster relief, environmental monitoring, and agriculture \cite{10287979}. 
Their ability to access harsh locations and quickly collect data across large areas enhances operational efficiency \cite{9219201, 9884945}. Technological advancements continue to extend UAVs’ flight duration, payload capacity, and communication range, positioning them as effective tools in wireless communication. UAVs serve both as relay nodes to counteract signal degradation from shadowing or congested base stations (BSs) \cite{9712640}, and as airborne BSs to restore connectivity during disasters when terrestrial BSs are impaired. Compared to portable ground BSs, UAVs offer faster deployment and resilience to infrastructure failures \cite{9619467}. Nonetheless, increasing UAV operations introduces challenges related to energy usage and memory limitations. Employing technologies such as backscatter and caching helps mitigate these issues by improving energy conservation and data handling. Recent progress in IoT-enabled UAV systems demonstrates laser-powered backscatter significantly improving EH for extended missions \cite{IRS_IoT2024}. Furthermore, integrating backscatter with UAV-assisted mobile edge computing has been shown to effectively reduce latency and energy consumption \cite{GreenUAVIoT2023}.

Backscatter communication (BackCom) addresses energy efficiency in UAV networks using circuits that operate at micro-Watt ($\mu W$) power levels \cite{10436857, 10297369, 10451048}. Efficiency is further improved through optimized UAV flight paths, backscatter device (BD) scheduling, and carrier emitter (CE) management \cite{9222571}. Although transmission requires minimal power, UAV propulsion remains energy-intensive, reaching hundreds of Watts \cite{9450021}, and limited battery storage imposes operational constraints \cite{9674583}. As a solution, wireless power transfer (WPT) supports continuous UAV activity across diverse applications \cite{10525626, 9468714, 9471791, 9708417}. Jayakody $\textit{et al.}$ \cite{8892608} proposed a self‐sustaining UAV system by integrating EH, WPT, and simultaneous wireless information and power transfer (SWIPT), resolving self‐interference in full‐duplex transmission. In another study, Yan $\textit{et al.}$ \cite{9163290} focused on UAV‐enabled wireless sensor networks (WSNs), where UAVs harvested energy from base stations to serve multiple WSNs, optimizing total energy acquisition across sensors. Recent advances have  demonstrated UAV‐based backscatter systems enabling mobile edge computing for IoT applications with improved  success rates \cite{Turn0search16}.

Cache memory plays a critical role in reducing latency and controlling network congestion by storing frequently accessed content near end users \cite{9542958, 10274811}. In UAV networks, onboard caching notably improves response time in areas with difficult terrain or limited infrastructure. Deploying caches at edge UAV nodes not only enhances system performance but also addresses latency concerns. Masood \textit{et al.} \cite{9621115} introduced a content caching scheme for high-altitude platform-assisted multi-UAV systems, employing a hierarchical federated learning algorithm to forecast content demand while ensuring user privacy. Meanwhile, Zhang \textit{et al.} \cite{9373692} designed a UAV-enabled protocol for data dissemination in Vehicle-to-everything systems that merges proactive caching with cooperative file sharing. Their solution leverages dynamic trajectory scheduling to optimize cache duration and applies relay prioritization with channel prediction to increase sharing performance. A recent study has also shown that joint cooperative caching and power control in UAV-enabled IoT vehicular networks significantly improves energy efficiency \cite{Turn0search13}.

Building on previous discussions and highlighting the efficiency of wireless power, caching, and BackCom technologies, this study examines a caching-enabled BackCom network with multiple UAVs equipped with SWIPT capabilities. The UAVs cache frequently requested content, reducing the data transmitted from the source to the destination. In contrast to prior studies in \cite{8901136}, which primarily focus on a single UAV functioning as a transmitter or receiver in 2D space or neglect the total-power optimization, this research addresses a network of UAVs functioning as airborne. These UAVs harvest energy from signals emitted by the source and use it to reflect the signal toward the destination in a 3D environment, effectively enhancing network efficiency. A comparison with previous works is in Table \ref{table:0}, and our key contributions are
\begin{table*}[t]
\centering
\caption{Boldly contrasting our contribution to the literature}
\label{table:0}
\renewcommand{\arraystretch}{1.1}
\begin{tabular}{|l|c|c|c|c|c|c|c|c|c|c|}
\hline
Approach & 
2020 \cite{8901136} & 
2022 \cite{9723521} & 
2022 \cite{9750143} & 
2023 \cite{10065386} & 
2024 \cite{adhikari2024energy} & 
2024 \cite{10576636} &
2025 \cite{10947038} &
This paper \\ 
\hline
Mutiple UAV supported & \xmark & \xmark & \checkmark & \xmark & \checkmark & \checkmark & \checkmark & \checkmark \\
\hline
Backscatter supported  & \checkmark & \checkmark & \checkmark & \checkmark & \xmark & \xmark & \xmark & \checkmark \\
\hline
Power allocation & \xmark & \xmark & \xmark & \xmark & \xmark & \xmark & \xmark & \checkmark \\
\hline
Caching mechanism & \xmark & \checkmark & \xmark & \xmark & \xmark & \xmark & \xmark & \checkmark \\
\hline
Fixed altitudes' UAVs & \checkmark & \checkmark & \checkmark & \checkmark & \xmark & \xmark & \checkmark & \xmark \\
\hline
Dynamic altitudes' UAVs & \xmark & \xmark & \xmark & \xmark & \checkmark & \xmark & \xmark & \checkmark \\
\hline
UAV acting as a transmitter & \checkmark & \xmark & \checkmark & \checkmark & \xmark & \checkmark & \checkmark & \checkmark \\
\hline
UAV acting as a backscatter & \xmark & \checkmark & \xmark & \xmark & \xmark & \xmark & \xmark & \checkmark \\
\hline
Evolutionary algorithm supported & \xmark & \xmark & \xmark & \xmark & \xmark & \checkmark & \xmark & \checkmark \\
\hline
DRL supported  & \xmark & \xmark & \xmark & \xmark & \checkmark & \xmark & \checkmark & \checkmark \\
\hline
\end{tabular}
\end{table*}

\begin{enumerate}
\item We explore a network where a source transmits to a destination in the absence of a direct link, necessitating the assistance of UAVs. We propose a novel wireless-powered UAV communication protocol, incorporating backscatter and cache-assisted technologies, which significantly reduces power consumption while the base station supplies energy to the UAVs.
\item The dynamic time splitting (DTS) is employed to divide each UAV empowered by backscatter devices (UBD) movement slots into intervals for EH and data transmission. Hence, DTS ratio, power transmission, and UAV trajectory require careful planning to optimize throughput under energy constraints. The DTS ratio determines harvested energy and data rates, necessitating precise scaling in each time slot.
\item The throughput maximization problem subject to the UAV flying time, speed, trajectory, power, and DTS ratio. To address its non-convex nature, the problem is divided into three sub-problems: optimizing the DTS ratio,  the trajectory, and the total power while the other variables are fixed. The closed-form DTS solutions derived from KKT conditions reduce computational complexity. Trajectory and power optimizations are solved using the BCD through successive convex approximation (SCA). 
\item Alternatively, a GA-based approach is introduced to seek for a better solution. It employs single-point crossover, replacement mutation, and rank-based selection, with fitness scores that assess variables encoded in multi-dimensional vectors updated across iterations.
\end{enumerate}
To explicitly delimit the scope of the proposed optimization framework, it is imperative to clarify its target practical deployment paradigms. Unlike cellular user equipment networks, the proposed algorithms are fundamentally designed for IoT topologies. Typical application scenarios that strictly align with this requirement include post-disaster structural health monitoring, wide-area precision-agriculture sensor networks, and remote industrial data-collection pipelines. The terrestrial IoT sink nodes (acting as end users) are statically deployed at fixed, pre-surveyed Cartesian coordinates. Consequently, the multi-UAV swarm operates as a centralized fleet of dynamic data mules and wireless energy transmitters, leveraging pre-flight, offline-computed optimal mission parameters to efficiently service these stationary nodes. 

The rest of this paper is organized as follows. Section~\ref{Sec:Sys} outlines the system model and the problem formulation. Section~\ref{Sec:BCD} details the iterative algorithm for solving the linear EH model in a 3D UAV-enabled BackCom network. Section~\ref{Sec:Evol} presents the GA-based optimization. Section~\ref{Sec:Numer} provides numerical results, and Section~\ref{Sec:Concl} concludes the paper.

\section{Communication System and Signal Models} \label{Sec:Sys}
We investigate a network of multiple UAVs operating in 3D space, enhanced by cache technology and backscatter circuitry, as shown in Fig.~\ref{Traj}. The BS is located at the source $S$ and the sink node is positioned at the destination $D$. The coordinates of $S$ and $D$ are fixed at $\mathbf{W}_{s}$ and $\mathbf{W}_{d}$, respectively. We focus on a scenario where data is transmitted from the BS  to the destination in a non-line-of-sight (NLoS) environment or under severe fading conditions, and the user is successfully served by the network. Table \ref{table:notations} lists frequently used symbols together with their meanings for easier reading and reference.

\subsection{UAV Trajectory \& Propagation Channel Model}

\begin{figure}[t]
    \centering
    \includegraphics[trim=2.3cm 0.55cm 0.25cm 2.85cm, clip=true, width=3in]{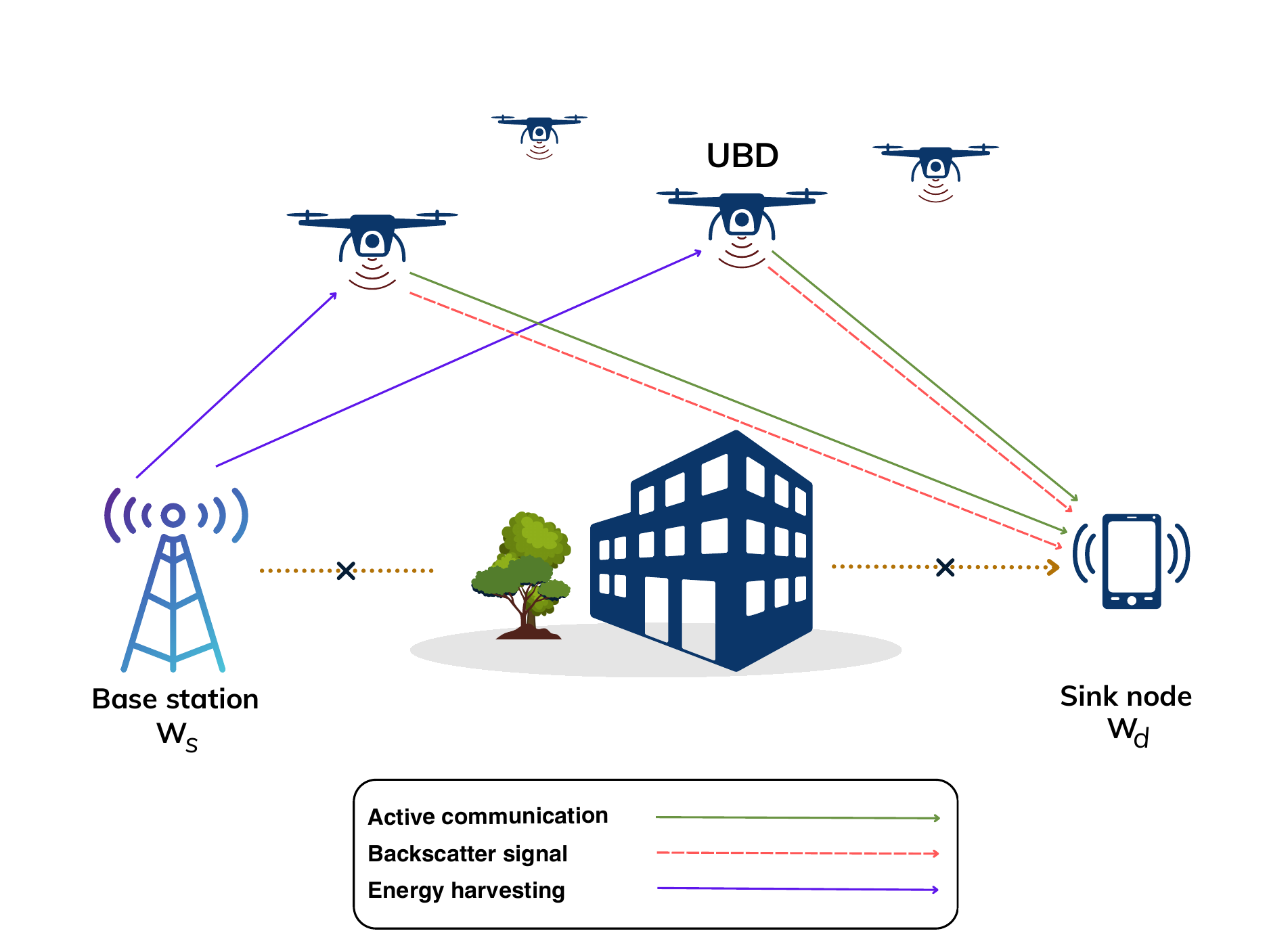}
    \caption{The considered system model with multiple-UAV trajectory of UAVs.}
    \label{Traj}
\end{figure}
This study focuses on the communication links between the BS and UBDs, and between UBDs and the destination. To provide flexibility for adjustments as the user at $D$ changes location, $T$ is defined as the total duration for a UBD to travel from its initial to final location and serve the destination. The time window $T$ is divided into $N$ slots, each lasting $\delta t = T/N$. 
\begin{figure}[t]
    \centering
    \includegraphics[trim=2.3cm 8.0cm 0.25cm 8.0cm, clip=true, width=3.7in]{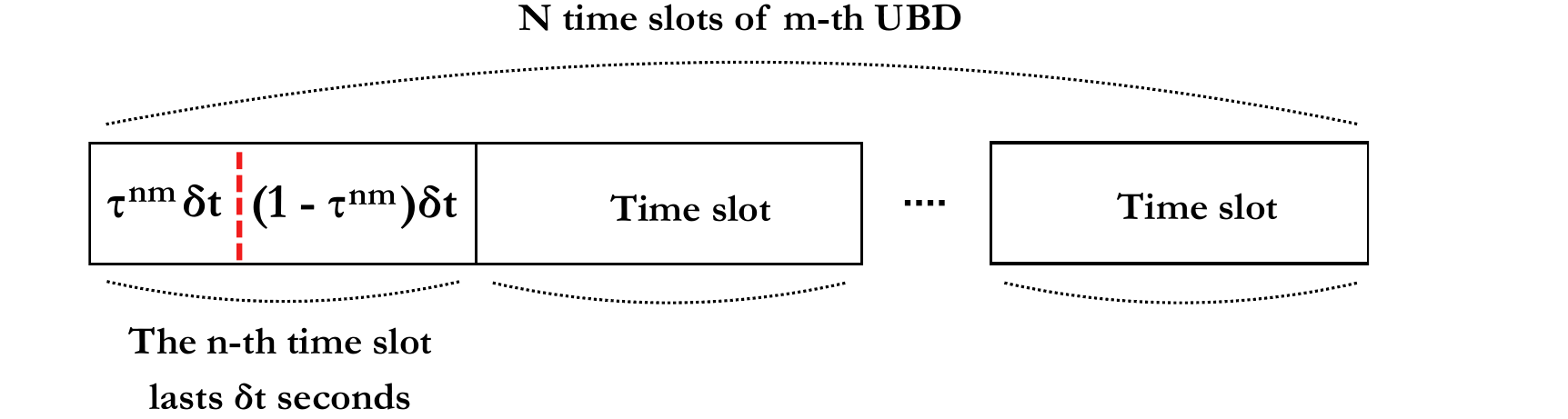}
    \caption{The schematic time-slot diagram.}
    \label{Slot}
\end{figure}
Fig.~\ref{Slot} illustrates the microsecond-level synchronization of the DTS protocol within a single optimization epoch $T$. The continuous block is divided into $N$ discrete slots. Within each slot $n$, the fraction $\tau_{nm}$ is exclusively dedicated to harvesting RF energy from the base station, while the remaining fraction $(1 - \tau_{nm})$ activates the transmitter circuits for backscattering and active data payload delivery to the destination.

The $m$-th UBD's location during the $n$-th slot is expressed as $\mathbf{q}^{nm}$, $\forall n \in \mathcal{N} \triangleq  \{0,\ldots,N \}, \ m \in \mathcal{M} \triangleq  \{1,\ldots,M \}$. Representing the maximum UBD speed as $V_{\max}$, the positions of the $m$-th UBD, $\forall m \in \mathcal{M}$, between consecutive slots $\left ( n + 1 \right )$ and $n, \forall n\in \mathcal{N}$, must satisfy the following constraints
\begin{align}
    &\Delta ^{nm}\triangleq  \|\mathbf{q}^{(n+1)m} - \mathbf{q}^{nm} \| \leq V_{\max}\delta_t,\label{1} \\
    & \mathbf{q}^{0m} = \mathbf{q}^{Im},\ \mathbf{q}^{Nm} = \mathbf{q}^{Fm}, \label{2}
\end{align}
where $\mathbf{q}^{Im}$ is the initial location of the $m$-th UBD; $\mathbf{q}^{Fm}$ is its final location. The constraint \eqref{1} requires that the distance covered by the UBD between two successive time slots does not exceed the maximum permissible flight distance within the given time duration $\delta_t$. Additionally, the constraint \eqref{2} imposes that the UBD's trajectory must begin and conclude at the fixed coordinates of points $\mathbf{q}^{Im}$ and 
$\mathbf{q}^{Fm}$, respectively. The term $ \|\mathbf{q}^{(n+1)m} - \mathbf{q}^{nm} \|$ denotes the Euclidean norm. To maintain consistency in notation, we will refer to the BS, destination, and UBD as $s$, $d$, and $u$, respectively. Thus, $\forall n \in \mathcal{N}$, the distances between the BS and the $m$-th UBD (for the incoming wave) and between the $m$-th UBD and the destination (for the backscattered wave) in the $n$-th time slot can be expressed as
\begin{align}
d_{su}^{nm} =  \| \mathbf{q}^{nm} - \mathbf{W}_{s}  \|, d_{du}^{nm} =  \| \mathbf{q}^{nm} - \mathbf{W}_{d} \|.
\end{align}
\begin{table}[t]
\centering
\caption{Summary of Key Notations}
\begin{tabular}{|c|l|l|}
\hline
\textbf{Symbol} & \textbf{Definition} & \textbf{Unit} \\ \hline \hline
$K$ & Total number of UAVs & - \\ \hline
$N$ & Total number of discrete time slots & - \\ \hline
$c_r$ & Redundant content caching ratio & - \\ \hline
$\tau_{nm}$ & Dynamic Time Splitting ratio  & - \\ \hline
$\delta t$ & The time interval of each time slot & s \\ \hline
$P_{\mathrm{WPT}}$ & Transmission power for Wireless Power Transfer & dB \\ \hline
$\sigma^2$ & Variance of the Additive White Gaussian Noise & dB \\ \hline
$P_s^{nm}$ & Transmit power of the BS & mW \\ \hline
$P_u^{nm}$ & Transmit power of the UBD & mW \\ \hline
$\mathbf{W}_s, \mathbf{W}_d$ & Fixed 3D coordinates of the BS and destination & - \\ \hline
$\mathbf{q}^{nm}$ & The $m$-th UBD's location during the \emph{n}-th slot & - \\ \hline
\end{tabular}
\label{table:notations}
\end{table}
In our investigation, a comprehensive fading channel model is considered, incorporating both line-of-sight (LoS) and non-line-of-sight (NLoS) components. This model reflects the deployment of UAVs across diverse radio environments, including rural, urban, and suburban areas. Additionally, we meticulously examine the impact of large-scale and small-scale fading effects to effectively characterize the random fluctuations inherent in the environment \cite{9416816}. By analyzing the interaction between these factors, we aim to gain a deeper understanding of the complex dynamics and inherent uncertainties governing the fading characteristics of the channels. In the specification, we analyze the network performance relying on the quasi-static fading channel model. Even though the propagation environments vary both in time and frequency, the system operates in coherence intervals during which the channels are static and frequency-flat \cite{van2026single,ravan2025enhancing}. Moreover, Rician fading models indicate the presence of dominant LoS components and rich scattering environments, such as in urban areas \cite{huangfu2025performance}. The channel coefficient $\emph{h}_{iu}^{nm}$ for $m$-th UB, $\forall$ \emph{i} $\in$ $\left \{\emph{s}, \emph{d}  \right \}$, during the $n$-th time slot, is given by
\begin{equation} \label{eq:hiun}
h_{iu}^{nm} = \sqrt{\psi_{iu}^{nm}}\widetilde{h}_{iu}^{nm}=\sqrt{\omega_0 (d_{iu}^{nm})^{-\alpha }}\widetilde{h}_{iu}^{nm},
\end{equation}
where $\psi_{iu}^{nm}$ and $\widetilde{h}_{iu}^{nm}$ represent the symbols for large-scale fading and small-scale fading, respectively, in the $n$-th time slot. Here, $w_{0}$ is the reference channel gain with $d_{iu}^{nm}$ equal to one meter, and $\alpha$ represents the path loss exponent.  The small-scale fading component $\widetilde{h}_{iu}^{nm}$ can be expressed as: 
\begin{equation}
{\widetilde{h}}_{iu}^{nm} = \sqrt{\frac{K}{1 + K}}{\bar{h}}_{iu}^{nm}+
\sqrt{\frac{1}{1 + K}}{\widehat{h}}_{iu}^{nm},
\end{equation}
 where $\bar{h}_{iu}^{nm}$ represents the deterministic LoS component, and $\hat{h}_{iu}^{nm}$ denotes the NLoS component. The Rician factor $K$ influences the complex dynamics of the fading process.
\subsection{Caching Model and Limited Transmit Power Budget}
In order to minimize latency, increase data speed, and improve transmission efficiency, each UBD is provisioned with additional cache memory. This added memory facilitates the pre-loading of part of the data intended for transmission, thereby optimizing the overall data transfer process. In more detail, each UBD retains a fraction of the data source within its cache. Let $c_r$ represent the proportion of the data source stored by the caching model, where $0 \leq c_r \leq 1$, and let $M$ denote the number of UBDs. When the destination requests a data source, a segment $c_r$ of this file is already available in the storage of each UBD. Utilizing the caching mechanism, the BS transmits only a portion of the data source to the destination via the UBDs, with the transmitted amount being proportional to $(1 - c_rM_c )$, where $M_c$ is the number of different cache coefficients $c_r$ set in the UBDs.
It is important to note that this caching model represents a lower bound for scenarios where the UAVs have prior knowledge of content popularity.

To optimize throughput at the destination while ensuring realistic power management, we impose separate constraints on the transmission power of the BS and UAV. This adjustment prevents the transmit powers from exceeding the respective hardware limitations of each device. The constraints are expressed for $\forall m \in \mathcal{M},\forall n \in \mathcal{N}$ as follows
\begin{equation}
    0 \leq P_s^{nm} \leq P_s^{\max}, 0 \leq P_u^{nm} \leq P_u^{\max}, 
\end{equation}
where $P_s^{nm}$ and $P_u^{nm}$ denote the transmit power of the BS and UBD, respectively, for UBD $m$ at time slot 
$n$. Here, $P_s^{\max}$ and $P_s^{\max}$ represent the maximum allowable transmit power for the BS and UBD, respectively.

\subsection{Energy Harvesting and Energy Consumption}

Given the constrained power budget, it becomes imperative to consider EH to extend the operational lifespan of UBDs. To achieve this objective, a dynamic time-splitting mechanism is devised, which partitions each time slot into two phases for UBD communication.  During a time slot of length $\delta_t$, the UBD spends $\tau^{nm} \delta_t$ for transmission of backscattering signals and $(1 - \tau^{nm}) \delta_t$ for EH. The energy harvested by the UBD, denoted as $E_h^{nm}$, is computed as \cite{8093703}
\begin{equation}
E_h^{nm} \triangleq \mu\left ( 1 - \tau^{nm} \right )\delta_t P_{\mathrm{WPT}}\mathbb{E} \{  | h_{su}^{nm}  |^{2}  \}, \label{7}
\end{equation}
where $P_{WPT}$ represents the transmit power of the BS used during the EH phase, which lasts for $\left ( 1 - \tau^{nm} \right )$ $\delta_t$ in the $n$-th time slot. The EH efficiency, denoted by $\mu$, corresponds to the linear model of electromagnetic energy harvesters. The parameter $\tau^{nm}$ signifies dynamic time-division duplexing within each time slot. For instance, when $\tau^{nm}$ = 1, the entire slot is allocated for backscattering signals to the destination. Conversely, when $\tau^{nm}$ = 0, the entire slot is dedicated to EH. The expectation of the channel coefficient is given as 
\begin{equation}
\mathbb{E} \{ | h_{su}^{nm}  |^{2}  \} = \omega_0 / {{(d_{su}^{nm})}^{\alpha}}.
\end{equation}
\begin{table}[t]
\centering
\label{dl}
\caption{Experimental Data}
\begin{tabular}{|c|l|c|}
\hline
Parameters & Values \\ \hline \hline
$\rho$ & 1.225 $kg/m^3$ \\ \hline
$I$ & 0.1 \\ \hline
$A$ & 0.8 $m^2$ \\ \hline
$\delta$ & 0.1 \\ \hline
$W$ & 0.5 N \\ \hline
$\Omega$ & 100 $rad/s$ \\ \hline
$R$ & 0.08 m \\ \hline
$s$ & 0.05 \\ \hline
\end{tabular}
\label{table:1}
\end{table} 
The classification of energy consumption for the UBDs involves three main energy components for backscattering, expended on propulsion, and for active communication with the destination. Specifically, the energy consumed by the $m$-th UBD for propulsion during the $n$-th time slot, denoted as $\emph{E}_{\mathrm{fly}}^{nm}$, is calculated using the rotary-wing model presented in \cite{8663615} as
\begin{align}
        &E_{fly}^{nm}(\mathbf{q})=P_0(\delta _t + \frac{k_1}{\delta _t}(\Delta ^{nm})^{2}) + \nonumber \\ 
        &P_1\sqrt{\sqrt{\delta _t^{4}+k_2^{2}(\Delta ^{nm})^{4}}-k_2(\Delta ^{nm})^{2}}
        + \frac{k_3(\Delta ^{nm})^{3}}{\delta _t^{2}}, \label{9}
\end{align}
where $ P_0=\frac{\delta }{8}\rho sA\Omega^{3}R^{3}, \ P_1=(1+I)\frac{W^{3/2}}{\sqrt{2\rho A}}, \ k_1 = \frac{3}{\Omega^{2}R^{2}}, \ k_2 = \frac{1}{2v_{0}^{2}}, \ k_3 = 0.5d_0\rho sA$. The model incorporates the profile drag coefficient $\delta$, air density $\rho$ [kg/m$^3$], rotor solidity $s$, rotor disc area $A$ [m$^2$], blade angular velocity $\Omega $ [rad/s], rotor radius $R$ [m], an incremental adaptation value $I$ (set to 0.1) concerning the induced power, and UAV weight $W$ [N] with specific values shown in Table~\ref{table:1}. During the $n$-th time slot, the model in \eqref{9} determines the energy consumed by the UAV's backscatter equipment, represented as $\tau^{nm} \delta_t P_b$, where $P_b$ denotes the UAV's hardware power consumption during the backscatter phase \cite{8093703}. An energy constraint for the UBD can be established, incorporating all the parameters mentioned
\begin{equation}
\sum\nolimits_{i = 1}^{n}{( E_{\mathrm{fly}}^{nm}(\mathbf{q})\  + \ \tau^{nm}\delta_{t}\ (P_{b}\  + \ P_u^{nm}) )  \leq \sum\nolimits_{i = 1}^{n}E_{h}^{nm}}.\label{10}
\end{equation}
This suggests that the total energy harvested by the UBD in the $n$-th time slot should exceed its cumulative energy consumption. By combining \eqref{7} to \eqref{10}, a closed-form expression for the constraint in \eqref{10} can be derived:
\vspace{-0.2cm}
\begin{multline}
\sum\nolimits_{i = 1}^{n} \left( E_{\mathrm{fly}}^{nm}(\mathbf{q}) +  \tau^{nm}\delta_{t} (P_{b}  +  P_u^{nm}) \right) \\ \leq \sum\nolimits_{i = 1}^{n} \frac{\mu\left ( 1 - \tau^{nm} \right ) \delta_t \omega_0 P_{\mathrm{WPT}}}{{(d_{su}^{nm})}^{\alpha}}. \label{11}
\end{multline}
\vspace{-0.45cm}
\subsection{Signal Model \& Shannon Capacity}
A time-division duplexing (TDD) protocol is employed, as extensively discussed in \cite{8901136}.
Each time slot is divided into two segments using the dynamic time splitting method: the intervals $\left ( 1 - \tau^{nm} \right )$ $\delta_t$ and $\tau^{nm} \delta_t$ are respectively allocated for uplink data transmission from the BS to the UBD, and downlink data transmission from the UBD to the destination. The dynamic time splitting ratio for the $n$-th time slot adheres to the constraint 0 $\leq$ $\tau^{nm}$ $\leq$ 1. Denoting $v_{s}^{nm}$ as the  data symbol transmitted by the BS to the $m$-th UBD with unit power with $\mathbb{E} \{ | v_{s}^{nm} |^{2}  \}$ = 1  during the $n$-th time slot, the received signal at $m$-th UBD can be expressed as
\begin{equation}
g_{u}^{nm}= \sqrt{P_s^{nm}}h_{su}^{nm}v_{s}^{nm}+n_u,
\end{equation}
where $n_u\sim \mathcal{CN}\left ( 0, \sigma _{u}^{2} \right )$ represents the additive white Gaussian noise (AWGN) at the UBD. Furthermore, the signal $v_{u}^{nm}$, resulting from the backscattering by the $m$-th UBD during the $n$-th time slot, is given by \cite{8700623} as 
\begin{equation}
v_{u}^{nm}= \sqrt{\eta_{u}^{n}P_s^{nm}}h_{su}^{nm}v_{s}^{nm},\label{13}
\end{equation}
whereas $\emph{$\eta$}_{u}^{n}$, ranging from 0 to 1, denotes the backscatter coefficient for time slot $n$. Furthermore, the factors ignored in \eqref{13}, such as supplementary noise and signal processing latency, are crucial and well documented in prior studies \cite{8093703}, \cite{8700623}, \cite{8424210}. With the active link provided by the $m$-th UBD, the signal received at the destination is unfolded as
\begin{equation}
g_{d}^{nm}=h_{ud}^{nm}v_{u}^{nm}+\sqrt{P_u^{nm}}h_{ud}^{nm}v_{s}^{nm}+n_d. \label{14}
\end{equation}
where $n_d\sim \mathcal{CN} ( 0, \sigma _{u}^{2} )$ is the additive white Gaussian noise (AWGN) at the destination. The received signal at the destination due to backscattering and active data transmission from the BS and the $m$-th UBD during time slot $n$ is denoted by $\sqrt{P_s^{nm}}\emph{h}_{ud}^{n}v_{s}^{nm}$ and $\sqrt{P_u^{nm}}\emph{h}_{ud}^{n}v_{s}^{nm}$, respectively. It is important to note that the noise power from backscattering, denoted as $\sqrt{\eta _{u}^{n}n_u}$ in equation \eqref{14},  is much lower than the baseband noise power \cite{8302460} and is thus considered negligible in the equation's formulation. The overall received signal at the destination, integrating from  \eqref{13} and \eqref{14}, is
\begin{align}
&g_{d}^{nm}=\sqrt{\eta_{u}^{n} P_s^{nm}} h_{ud}^{nm} h_{su}^{nm} v_{s}^{nm} +\left \lceil c_r  \right \rceil\sqrt{P_u^{nm}}h_{ud}^{nm}v_{c}^{nm}+n_d, 
\end{align}
where $v_{c}^{nm}$ represents the transmitted data symbol cached within the UBD $m$, with an expected power  $\mathbb{E} \{ | v_{c}^{nm} |^{2} \} = 1$. $\left \lceil c_r  \right \rceil$ denotes the ceiling function of $c_r$. The term $\sqrt{P_u}\emph{h}_{ud}^{n}v_{c}^{n}$ indicates that if a $c_r$ fraction of the requested file is cached within the UBD's storage, the UBD can then transmit the cached signal to the destination. Consequently, the signal-to-noise ratio (SNR) at the destination can be expressed as
\begin{equation}
\mathrm{SNR}_{d}^{nm} = \frac{\eta_{u}^{n} P_s^{nm} \left | h_{ud}^{nm} \right |^{2} \left | h_{su}^{nm}  \right |^{2} + P_u^{nm}\left \lceil c_r  \right \rceil\left | h_{ud}^{nm} \right |^{2}}{\sigma _{d}^{2}}.
\end{equation}
Here, the variance of noise power at the destination, $n_d$, is indicated by ${\sigma}_{d}^{2}$ in the equation. In the $n$-th time slot, the data throughput, measured in bps, at the UBD $m$ and the destination can be respectively computed as follows:
\begin{align}
& R_{u}^{nm}=B\log_{2}{\left ( 1+\mathrm{SNR}_{u}^{nm} \right )}, \\
& R_{d}^{nm}=B\log_{2}{\left ( 1+\mathrm{SNR}_{d}^{nm} \right )},
\end{align}
where $B$ (Hz) represents the system bandwidth and $\mathrm{SNR}_{u}^{nm}=P_s^{nm}{\left | \emph{h}_{su}^{nm} \right |}^{2}/{\sigma}_{u}^{2} $. Notably, the instantaneous channel state information (CSI), including $\emph{h}_{su}^{nm}$ and $\emph{h}_{ud}^{nm}$, is characterized as a random variable. Consequently, the instantaneous data throughput also exhibits the characteristics of a random variable. In light of this, we consider the approximated received data throughputs for the UBD and the destination, which can be formulated as follows:
\begin{align}
& \bar{R}_{u}^{nm}=B\mathbb{E}\{  \log_{2}{\left ( 1+\mathrm{SNR}_{u}^{nm} \right )} \}, \\
& \bar{R}_{d}^{nm}=B\mathbb{E}\{ \log_{2}{\left ( 1+\mathrm{SNR}_{d}^{nm} \right )} \}.
\end{align}
Deriving a closed-form expression for $\bar{R}_{u}^{nm}$ and $\bar{R}_{d}^{nm}$ is a challenging task. Consequently,  Lemma~\ref{lemma1} provides the expressions for $\bar{R}_{u}^{nm}$ and $\bar{R}_{u}^{nm}$ through approximation functions.
\begin{lemma} \label{lemma1}
Approximations for $\bar{R}_{u}^{nm}$ and $\bar{R}_{d}^{nm}$ can be formulated using closed-form expressions as follows:
\vspace{-0.35cm}
\end{lemma}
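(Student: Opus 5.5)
The approximations will come from moving the expectation inside the logarithm (Jensen's inequality), using the second moments of the Rician channels. Since $\log_2(1+x)$ is concave in $x$, Jensen gives $\mathbb{E}\{\log_2(1+X)\}\le \log_2(1+\mathbb{E}\{X\})$. The lemma should therefore read
\begin{align}
\bar{R}_{u}^{nm} \approx B\log_2\Bigl(1+\frac{P_s^{nm}\omega_0}{\sigma_u^2 (d_{su}^{nm})^{\alpha}}\Bigr),\qquad
\bar{R}_{d}^{nm} \approx B\log_2\Bigl(1+\frac{\eta_u^n P_s^{nm}\omega_0^2\,\mathbb{E}\{|\widetilde{h}_{ud}^{nm}|^2|\widetilde{h}_{su}^{nm}|^2\}}{\sigma_d^2 (d_{ud}^{nm})^{\alpha}(d_{su}^{nm})^{\alpha}}+\frac{\lceil c_r\rceil P_u^{nm}\omega_0}{\sigma_d^2 (d_{ud}^{nm})^{\alpha}}\Bigr).
\end{align}

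The first step is the moments of the small-scale fading. I assume the normalization $|\bar{h}_{iu}^{nm}|=1$ and $\widehat{h}_{iu}^{nm}\sim\mathcal{CN}(0,1)$. Expanding $|\widetilde{h}|^2$ and dropping the zero-mean cross terms gives $\mathbb{E}\{|\widetilde{h}_{iu}^{nm}|^2\}=\frac{K}{1+K}+\frac{1}{1+K}=1$. This matches the earlier identity $\mathbb{E}\{|h_{su}^{nm}|^2\}=\omega_0/(d_{su}^{nm})^{\alpha}$.

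The second step is the $u$-link. $\mathrm{SNR}_u^{nm}$ is linear in $|h_{su}^{nm}|^2$, so its mean is $P_s^{nm}\omega_0/(\sigma_u^2 (d_{su}^{nm})^{\alpha})$, and substituting this into the Jensen bound gives the first expression.

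The third step is the $d$-link. The numerator of $\mathrm{SNR}_d^{nm}$ contains the product $|h_{ud}^{nm}|^2|h_{su}^{nm}|^2$. I assume the BS–UBD and UBD–destination links fade independently, since they are distinct physical links. The expectation then factorizes as $\mathbb{E}\{|h_{ud}^{nm}|^2\}\,\mathbb{E}\{|h_{su}^{nm}|^2\}=\omega_0^2/((d_{ud}^{nm})^{\alpha}(d_{su}^{nm})^{\alpha})$, so the small-scale factor in the display equals one. The active cached term is linear, so its mean follows directly. Plugging both means into the Jensen bound gives the second expression, which is in closed form in $P_s^{nm}$, $P_u^{nm}$, and the trajectory through $d_{su}^{nm}$ and $d_{ud}^{nm}$.

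The main obstacle is justifying "approximation" rather than only an upper bound. I would argue that the Jensen gap is small in the Rician regime: for large $K$ the LoS component dominates and the variance of $|\widetilde{h}|^2$, which is $(1+2K)/(1+K)^2$, tends to zero, so the SNR concentrates around its mean. I would quantify this with a second-order Taylor expansion of $\log_2(1+x)$ about $\mathbb{E}\{X\}$. The remainder is of order $\mathrm{Var}\{X\}/(1+\mathbb{E}\{X\})^2$, which vanishes as $K\to\infty$. Should the paper intend an uncorrelated but non-independent product, I would fall back on citing the standard independence of the two hops.
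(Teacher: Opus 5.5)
Your moment computations and the independence factorization are fine. Your two expressions also have the paper's structure: a backscatter term $\eta_u^n\omega_0^2P_s^{nm}/(\sigma_d^2(d_{su}^{nm})^{\alpha}(d_{du}^{nm})^{\alpha})$ plus a cached term $\lceil c_r\rceil P_u^{nm}\omega_0/(\sigma_d^2(d_{du}^{nm})^{\alpha})$. But they are not the closed forms the lemma asserts. In \eqref{21}--\eqref{22} every SNR argument carries an extra factor $e^{-E}$, with $E\approx 0.5772$ the Euler--Mascheroni constant. Thus $\bar{R}_u^{nm}=B\log_2\bigl(1+e^{-E}\omega_0P_s^{nm}/((d_{su}^{nm})^{\alpha}\sigma_u^2)\bigr)$, and in $\bar{R}_d^{nm}$ the factor enters through $\theta=e^{-E}\omega_0/\sigma_d^2$. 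Jensen on the concave map $x\mapsto\log_2(1+x)$ with exact means cannot produce this constant. Your argument therefore establishes a different approximation: an upper bound on both rates whose SNR is larger by the factor $e^{E}\approx 1.78$ (about $2.5$~dB). The paper's $\bar{R}_u^{nm}$, by contrast, is a lower bound.

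The missing idea is Jensen in the logarithmic domain combined with exponentially distributed gains. The paper models $x=P_s^{nm}|h_{su}^{nm}|^2/\sigma_u^2$ as exponential, i.e.\ Rayleigh rather than your Rician normalization. It uses that $t\mapsto\log_2(1+e^{t})$ is convex to get $\mathbb{E}\{\log_2(1+x)\}\ge\log_2\bigl(1+e^{\mathbb{E}\{\ln x\}}\bigr)$. It then evaluates $\mathbb{E}\{\ln x\}=\ln\mathbb{E}\{x\}-E$ via \cite[4.331.1]{gradshteyn2014table}, which yields \eqref{21}.

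For the destination it writes $\mathrm{SNR}_d^{nm}=xy$, with $x=|h_{ud}^{nm}|^2$ and $y$ collecting the backscatter and cached terms. Your concave Jensen step is used only in $y$, giving an upper bound. The log-domain step is then applied in $x$, giving a lower bound. These two opposite steps are why \eqref{22} is called an approximation rather than a bound.

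Your tightness argument also does not carry over to the paper's setting. Under the exponential law the normalized gain has variance $1$, which does not vanish. For the $u$-link your upper bound keeps a gap tending to $E/\ln 2\approx 0.83$ bit/s/Hz at high SNR, while $\log_2\bigl(1+e^{-E}\mathbb{E}\{x\}\bigr)$ is asymptotically exact there. If you keep Rician fading, the log-domain route gives $\mathbb{E}\{\ln|\widetilde{h}_{iu}^{nm}|^2\}=\ln\frac{K}{1+K}+\mathrm{E}_1(K)$, where $\mathrm{E}_1$ is the exponential integral. This equals $-E$ only as $K\to 0$, so recovering the stated formulas requires the exponential model the proof actually uses.
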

\begin{align}
& \bar{R}_{u}^{nm} = B \log_{2}{\left( 1 + \frac{e^{- E}w_{0}P_{s}^{nm}}{{(d_{su}^{nm})}^{\alpha} \sigma_{u}^{2}} \right)\ }, \label{21}\\ 
& {\bar{R}}_{d}^{nm} = B
\log_{2}{\left( 1 + \frac{\theta(\eta_{u}^{n}w_{0}P_{s}^{nm} + \ {\bar{P}}_{u}^{nm}{(d_{su}^{nm})}^{\alpha})}{\varrho} \right)},\label{22}
\end{align}
where $\theta\triangleq\frac{e^{-E}\omega _0}{\sigma _{d}^{2}}, {\bar{P}}_{u}^{nm}  \triangleq  P_u^{nm}\left \lceil c_r  \right \rceil, \varrho \triangleq {(d_{su}^{nm})}^{\alpha}{(d_{du}^{nm})}^{\alpha}$.
\begin{proof}
Let us consider a function $f(x)=\mathbb{E} \{ \log_{2} ({1+x} ) \}$ with $x>0$. Consequently, $f(x)=\mathbb{E}\{ \log_{2} ({1+e^{\ln x}}) \}, x>0$. By applying Jensen's inequality to the convex function $\log_{2} ({1+e^{\ln x}})$ with respect to $\ln x$ variable, we get:
\begin{align}
    f(x) \geq \log_{2} ({1+e^{\mathbb{E}\{\ln x \}}} ),\label{23}
\end{align}
Denoting $x\triangleq P_s^{nm}{\left | \emph{h}_{su}^{nm} \right |}^{2}/{\sigma}_{u}^{2} $, $x$ is an independent random variable following an exponential distribution with rate parameter $\lambda_f= (\mathbb{E}\{x \})^{-1}>0$. Hence, $\lambda_f= {(P_s^{nm}\omega_0{(d_{su}^{nm})}^{-\alpha}/{\sigma}_{u}^{2} )}^{-1} $. Using $\left [ \cite{gradshteyn2014table},4.331.1 \right ]$, we have
\begin{align}
&\mathbb{E}\{ \ln x \} =\int_{0}^{+\infty}\lambda_f e^{-\lambda_f x} \ln xdx=-(\ln \lambda_f+E )\nonumber\\
&\quad \quad \quad \quad =\ln (P_s^{nm}\omega_0{(d_{su}^{nm})}^{-\alpha}/{\sigma}_{u}^{2})-E,\label{24}
\end{align}
where $E=0.5772156649$ $\left [ \cite{gradshteyn2014table},4.331.1 \right ]$ is the Euler-Mascheroni constant. Thus, \eqref{21} is obtained by replacing \eqref{24} with \eqref{23}. To prove \eqref{22}, we consider  $w(x,y)=\mathbb{E} \{ \log_{2} ({1+xy} ) \}, x>0, y>0$ with $x$, $y$ being the independent random variables. By applying Jensen's inequality to the concave function $\log_{2} (1+xy)$ with respect to $y$ variable, we get
  $w(x,y) \leq \log_{2} ({1+x{\mathbb{E}\{ y \}}} ) \triangleq \tilde{w}(x,y)$.
Thus, $\tilde{w}(x,y) = \log_{2} ({1+e^{\ln x}{\mathbb{E}\{ y \}}} )$. Using Jensen's inequality for the convex function $\tilde{w}(x,y)$ and $\ln x$ variable, we obtain
\begin{align}
    \tilde{w}(x,y) \geq \log_{2} \left( 1+e^{\ln \mathbb{E}\{ x\}}{\mathbb{E}\{ y \} } \right) \triangleq \hat{w}(x,y).\label{26}
\end{align}
Since $\tilde{w}(x,y)$ can effectively approximate $w(x,y)$, although it does not serve as a definitive lower or upper bound. By applying \cite[4.331.1]{gradshteyn2014table} and using the notations $x=\left | h_{ud}^{nm} \right |^{2}$, $y=(\eta_{u}^{n} P_s^{nm} \left | h_{su}^{nm} \right |^{2} + P_u^{nm}\left \lceil c_r \right \rceil) / \sigma _{d}^{2}$, we then obtain 
\begin{align}
&\mathbb{E}\{ \ln x \} =\int_{0}^{+\infty}\lambda_g e^{-\lambda_g x} \ln xdx=-(\ln \lambda_g +E )\nonumber\\
&\quad \quad \quad \quad =\ln \left(\frac{\omega_0}{{(d_{su}^{nm})}^{\alpha}} \right)-E,\label{27}\\ 
&\mathbb{E}\left [y \right ]= {(\eta_{u}^{n} P_s^{nm} \omega_0{(d_{su}^{nm})}^{-\alpha} + \bar{P}_u^{nm})}/{\sigma _{d}^{2}}\label{28},
\end{align}
where $\lambda_g = (\mathbb{E}\{x\} )^{-1} ={(\omega_0/{(d_{su}^{nm})}^{\alpha})}^{-1}$. The expression in equation \eqref{22} can be obtained by substituting \eqref{27} and \eqref{28} into \eqref{26}. Consequently, Lemma~\ref{lemma1} is conclusively proved.
\end{proof}
\vspace{-0.35cm}
\subsection{Problem Formulation}
The objective of this section is to develop a mathematical formulation aimed at maximizing the total data transmission from each UBD to the destination. This is achieved by jointly optimizing the DTS ratio, UBD trajectory and total transmit power under a linear EH mode. To formalize this optimization problem, we define $\mathbf{q}\triangleq\left \{ \mathbf{q}^{nm}; n \in \mathcal{N}, m \in \mathcal{M} \right \}$, $\pmb{\tau}\triangleq\left \{ {\tau}_{n}; n \in \mathcal{N} \right \}$, and $\mathbf{P}\triangleq\left \{ P_{u}^{n}, P_{s}^{n}; n \in \mathcal{N}, m \in \mathcal{M} \right \}$. The problem is mathematically formulated as follows
\begin{subequations}
\label{eq:main}
\begin{align}
    &\mathcal{P}_{1}: 
\underset{\mathbf{q}, \pmb{\tau}, \mathbf{P}}{\mathrm{max}} \  \sum\nolimits_{m \in \mathcal{M}}^{} \sum\nolimits_{n \in \mathcal{N}}^{}{\tau^{nm}\delta_{t}}{\bar{R}}_{d}^{nm}, \label{29a}\\
    &\mbox{s.t.} \ \sum_{n \in N}^{}{\tau^{nm}\delta_{t}{\bar{R}}_{u}^{nm}} +  c_r S  \geq \sum_{n \in N}^{}{\tau^{nm}\delta_{t}{\bar{R}}_{d}^{nm}}, \forall m \in \mathcal{M},\label{29b}\\
    &\sum\nolimits_{n \in N}^{}{\tau^{nm}\delta_{t}{\bar{R}}_{d}^{nm}}  \geq S, \forall m \in \mathcal{M},\label{29c}\\
    &\sum\nolimits_{i = 1}^{n}{\left( E_{fly}^{nm}(\mathbf{q}) + \tau^{nm}\delta_{t}\left( P_{b} + P_{u}^{nm} \right) \right) } \nonumber\\
    &\quad \quad \quad \quad \quad \leq \sum\nolimits_{i = 1}^{n}\frac{\mu(1 - \tau^{nm})\delta_{t}w_{0}P_{\mathrm{WPT}}}{{(d_{su}^{nm})}^{\alpha}}, \forall m \in \mathcal{M}, \label{29d}\\
    &0 \leq P_s^{nm} \leq P_s^{\max}, \forall m \in \mathcal{M},\forall n \in \mathcal{N}, \label{29e}\\
    &0 \leq P_u^{nm} \leq P_u^{\max}, \forall m \in \mathcal{M},\forall n \in \mathcal{N}, \label{29e1}\\
    &\| \mathbf{q}^{(n + 1)m} - \mathbf{q}^{nm} \|  \leq \delta_{d} = V_{\max}\delta_{t},\forall m \in \mathcal{M},\forall n \in \mathcal{N},\label{29f}\\
    &\mathbf{q}^{0m} = \mathbf{q}^{Im},\ \mathbf{q}^{nm} = \mathbf{q}^{Fm},\forall m \in \mathcal{M},\label{29g}\\
   & 0\  \leq \tau^{nm} \leq 1,\forall n \in N,\forall m \in \mathcal{M}.\label{29h}
\end{align}
\end{subequations}
Here, $S$ represents the data demanded by the destination. The constraint \eqref{29b} ensures that the data transmitted from the UBD must exceed the data received by the destination. The constraint \eqref{29c} stipulates that the total data transmitted on the downlink from a UBD must meet or exceed the destination's data requirements, ensuring adherence to user demands. The constraint \eqref{29d} mandates that the energy consumed by the UBD for flight, backscatter, and communication should not exceed its harvested energy. The constraint \eqref{29e} specifies that the maximum allowable power value should not be exceeded by the transmission energy of both the BS and the UBD.\footnote{One way to enhance the effectiveness of the transmission is to maximize the total data throughput since its inverse is proportional to the delay. The maximum tolerated delay for transmission depends on the specific requirements of network services, whereas the considered optimization framework maintains a generic form. Consequently, these constraints are potential for future work.} Problem $\mathcal{P}_1$ is non-convex because its objective function and constraints \eqref{29b}, \eqref{29c}, \eqref{29d} are non-convex, making it hard to solve. Finding a solution for Problem~$\mathcal{P}_1$ is nontrivial due to the inherent non-convexity.
\begin{remark}
The rationale for considering a single destination is to focus on the core optimization aspects without introducing excessive complexity in channel modeling or data throughput evaluation. By limiting the setup to one end user, the characteristics of the wireless channels, backscatter reflection, and caching behavior can be carefully controlled and clearly analyzed. This setting allows us to validate the effectiveness of our proposed algorithm in a concise and well-isolated scenario. An extension to multi-user scenarios is conceptually straightforward but nontrivial with the presence of mutual interference as sharing time and frequency resources. Thus, each user is served by the multiple-UAV network, and the total throughput is of interest. Since the optimization structure and the underlying channel assumptions remain unchanged, the extension to $N$ users would involve summing the individual throughputs. This extension can be seamlessly incorporated into our formulation for future work.
\end{remark}
\section{BCD-based solution} \label{Sec:BCD}
To effectively address Problem $\mathcal{P}_1$, we decompose it into three sub-problems. First, the DTS ratio is optimized with a fixed trajectory and initial power transmission. Then, the UBD trajectory is optimized, maintaining the DTS ratio fixed from the previous step and the same initial transmit power. Finally, the transmit power is optimized for given the DTS ratio and trajectory obtained in the preceding steps. By using the BCD  \cite{7366709}, we design an iterative algorithm based on the KKT conditions.

\subsection{Joint Trajectory and Dynamic Time Splitting Optimization}
Given a trajectory $\mathbf{q}$ and initial power transmission values $\mathbf{P}$ for each UBD, the following optimization problem is formulated to obtain the DTS ratio $\pmb{\tau}$ as
\begin{subequations}
\label{eq:main}
\begin{align}
    &\mathcal{P}_{1}^{\pmb{\tau}}:\underset{ \pmb{\tau}}{\mathrm{max}} \  \sum\nolimits_{m \in \mathcal{M}}^{} \sum\nolimits_{n \in \mathcal{N}}^{}{\tau^{nm}\delta_{t}\bar{R}_{d}^{nm}} \label{30a}\\
    &\mbox{s.t.} \sum\nolimits_{n \in N}^{}{\tau^{nm}\delta_{t}\bar{R}}_{u}^{nm} +  c_r S 
    \geq \sum\nolimits_{n \in N}^{}{\tau^{nm}\delta_{t}\bar{R}}_{d}^{nm}, \forall m \in \mathcal{M},\\
    &\sum\nolimits_{n \in N}^{}{\tau^{nm}\delta_{t}\bar{R}_{d}^{nm}}  \geq S,\forall m \in \mathcal{M}, \\
    & \sum\nolimits_{i = 1}^{n}{\left( E_{fly}^{nm}(\mathbf{q}) + \tau^{nm}\delta_{t}\left( P_{b} + P_{u}^{nm} \right) \right) } \nonumber \\
    &\quad \quad \quad \quad \quad \leq \sum\nolimits_{i = 1}^{n}\frac{\mu(1 - \tau^{nm})\delta_{t}w_{0}P_{WPT}}{{(d_{su}^{nm})}^{\alpha}}, \forall m \in \mathcal{M},\\
   &0\  \leq \tau^{nm} \leq 1, n \in N,\forall m \in \mathcal{M}.
\end{align}
\end{subequations}
Since the Slater's condition is satisfied by $\mathcal{P}_{1}^{\pmb{\tau}}$,  the KKT conditions are sufficient for optimality because Problem $\mathcal{P}_{1}^{\pmb{\tau}}$ is evidently a linear programming and therefore it is convex \cite{boyd2004convex}. The Lagrangian function corresponding to problem $\mathcal{P}_{1}^{\pmb{\tau}}$ is expressed as follows
\begin{equation}
    \begin{aligned}
        &\mathcal{L}\left ( \pmb{\tau}, \lambda_1, \lambda_2, \lambda_3, \lambda_4  \right )\triangleq F(\pmb{\tau})+\lambda_1G(\pmb{\tau})+\lambda_2H(\pmb{\tau})+\lambda_3I(\pmb{\tau})\\
        &+\lambda_4J(\pmb{\tau}),
    \end{aligned}
\end{equation}
where the following definitions hold
\begin{align}       
       &F(\pmb{\tau})\triangleq\sum\nolimits_{n \in \mathcal{N}}^{}{\tau^{nm}\delta_{t}\bar{R}}_{d}^{nm},\\
       &G(\pmb{\tau})\triangleq \sum\nolimits_{n \in N}^{}{\tau^{nm}\delta_{t}\bar{R}}_{u}^{nm} +  c_r S -\sum\nolimits_{n \in N}^{}{\tau^{nm}\delta_{t}\bar{R}}_{d}^{nm}\geq 0, \label{3.5}\\
&H(\pmb{\tau})\triangleq \sum\nolimits_{n \in N}^{}{\tau^{nm}\delta_{t}\bar{R}}_{d}^{nm}-S\geq 0,\label{36}\\
        &I(\pmb{\tau})\triangleq\sum\nolimits_{i = 1}^{n}\frac{\mu(1 - \tau^{nm})\delta_{t}w_{0}P_{\mathrm{WPT}}}{{(d_{su}^{nm})}^{\alpha}}\nonumber\\
        &\quad \quad \quad -\sum\nolimits_{i = 1}^{n}{\left( E_{\mathrm{fly}}^{nm}(\mathbf{q}) + \tau^{nm}\delta_{t}\left( P_{b} + P_{u}^{nm} \right) \right) }\geq 0, \label{37} \\
&J(\pmb{\tau})\triangleq1-\tau _n\geq 0,\label{3.8}
\end{align}
where $\lambda_{1}$, $\lambda_{2}$, $\lambda_{3}$, $\lambda_{4}$ are  Lagrangian dual variables.\footnote{The optimal Lagrangian dual variables derived via the KKT conditions represent the \textit{shadow prices} of the coupled network resources. Specifically, the dual variable associated with the energy-harvesting constraint strictly quantifies the marginal utility of power, representing the exact, infinitesimal increment in the total network throughput that would be achieved if the UAV is granted one additional Joule of harvested RF energy. Conversely, the dual variable enforcing the strict time-slot duration acts as a penalty for temporal violations, dictating the precise equilibrium trade-off between the energy-gathering and the active data-transmission phases.} Here, \eqref{3.5}, \eqref{36}, \eqref{37}, and \eqref{3.8} are the primal feasibility conditions.
 Subsequently, the following expressions represent the complementary slackness conditions:
\begin{align}
    \lambda _1G\left ( \pmb{\tau } \right ) =0,
    \lambda _2H\left ( \pmb{\tau } \right )=0,
    \lambda _3I\left ( \pmb{\tau } \right ) =0,
    \lambda _4J\left ( \pmb{\tau } \right )=0, 
\end{align}
with the given stationarity condition:
\begin{align}
    &\frac{\partial L(\tau, \lambda_1, \lambda_2, \lambda_3, \lambda_4)}{\partial \tau}=\sum\nolimits_{n \in \mathcal{N}}{\delta_{t}}{\bar{R}}_{d}^{nm} + \nonumber \\
    &\lambda_1 \left ( \sum\nolimits_{n \in N}\delta_{t}{\bar{R}}_{u}^{nm} -\sum\nolimits_{n \in N}^{}{\delta_{t}\bar{R}}_{d}^{nm} \right ) + \lambda _2 \sum\nolimits_{n \in \mathcal{N}}^{}{\delta_{t}\bar{R}}_{d}^{nm}\nonumber \\
    &-\lambda _3\left ( \sum\nolimits_{n \in N}^{}X_1+\sum\nolimits_{n \in N}^{}{ \delta_{t}\left( P_{b} + P_{u}^{nm} \right) } \right ) -\lambda _4=0, \label{38}
\end{align}
 where $X_{1}\triangleq(\mu \delta _t\omega _0P_{\mathrm{WPT}})/{(d_{su}^{nm})}^{\alpha}$. For a feasible solution, the dual feasibility conditions should be satisfied with $\lambda_{1}$, $\lambda_{2}$, $\lambda_{3}$, $\lambda_{4}$ $\geq$ 0. The solution that maximizes the objective function in $\mathcal{P}_{1}^{\pmb{\tau}}$ is selected to achieve this optimal outcome. In the following theorem, two possible solutions for the optimal value of \pmb{$\tau$} are postulated.

{\begin{theorem} \label{theo1}
The optimal value for problem $\mathcal{P}_{1}^{\pmb{\tau}}$, denoted as $ \{ \tau_n^{m^\ast}  \}$, $n \in \mathcal{N},\forall m \in \mathcal{M},$ can be represented as follows: 
\begin{subequations}
\label{eq:main}
\begin{align}
    &\tau^{nm*}= \frac{c_r S}{N\delta_t\left ( \bar{R}_{d}^{nm}-\bar{R}_{u}^{nm} \right )}, \mbox{iff}\ \bar{R}_{d}^{nm}> \bar{R}_{u}^{nm}, \label{42a}\\
    &\tau^{nm*}= \frac{X_1-E_{fly}^{nm}\left ( \mathbf{q} \right )}{X_1+\delta _t\left ( P_b + P_{u}^{nm} \right )}.\label{42b}
\end{align} \label{35}
\end{subequations}
\end{theorem}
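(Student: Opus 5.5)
The plan is to work directly with the KKT system already written down for $\mathcal{P}_{1}^{\pmb{\tau}}$. Since $\mathcal{P}_{1}^{\pmb{\tau}}$ is a linear program in $\pmb{\tau}$ and Slater's condition holds, any KKT point is optimal. We therefore only need to exhibit $\pmb{\tau}$ and multipliers $\lambda_1,\dots,\lambda_4\ge 0$ that satisfy primal feasibility \eqref{3.5}--\eqref{3.8}, complementary slackness, and the stationarity condition \eqref{38}. The objective is linear with nonnegative coefficients $\delta_t\bar{R}_d^{nm}$, so it increases in every $\tau^{nm}$. At the optimum, the upper bound $\tau^{nm}\le 1$ is therefore either active or blocked by some other constraint that is tight. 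Enumerating which of $G$, $I$ is tight gives the two candidate expressions in \eqref{35}. I will work per UBD $m$ and treat each slot symmetrically, since all constraints are separable in $m$.

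\textbf{Case 1: the caching/data-flow constraint $G$ is active ($\lambda_1>0$, so $G(\pmb{\tau})=0$).} Stationarity \eqref{38} shows that a positive $\lambda_1$ can only balance the positive objective gradient when the coefficient $\delta_t(\bar{R}_u^{nm}-\bar{R}_d^{nm})$ is negative, i.e. when $\bar{R}_d^{nm}>\bar{R}_u^{nm}$. This is exactly the ``iff'' condition in \eqref{42a}. Setting $G=0$ gives $\sum_n \tau^{nm}\delta_t(\bar{R}_d^{nm}-\bar{R}_u^{nm})=c_rS$. Splitting the cache budget $c_rS$ equally over the $N$ slots, which is the symmetric choice consistent with a common multiplier, yields $\tau^{nm}\delta_t(\bar{R}_d^{nm}-\bar{R}_u^{nm})=c_rS/N$, which is \eqref{42a}. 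I then choose $\lambda_1$ from \eqref{38} with $\lambda_3=\lambda_4=0$ and verify that it is nonnegative. Conversely, if $\bar{R}_d^{nm}\le\bar{R}_u^{nm}$, then $G$ cannot be the binding constraint.

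\textbf{Case 2: the energy constraint $I$ is active ($\lambda_3>0$).} Setting $I(\pmb{\tau})=0$ slot by slot gives the linear equation
\begin{equation*}
(1-\tau^{nm})X_1 = E_{fly}^{nm}(\mathbf{q})+\tau^{nm}\delta_t(P_b+P_u^{nm}),
\end{equation*}
which I solve for $\tau^{nm}$ to obtain \eqref{42b}. I then take $\lambda_1=\lambda_4=0$, solve \eqref{38} for $\lambda_3$, and check that $\lambda_3>0$, since the bracket multiplying $\lambda_3$ is positive. Finally I check that $\tau^{nm}\in[0,1]$, which holds whenever $X_1\ge E_{fly}^{nm}$. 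The optimal $\tau^{nm}$ is then whichever candidate is feasible and gives the larger objective, matching the paper's remark about selecting the maximizing solution.

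The main obstacle is rigor in passing from the aggregate constraints, where sums run over $n$, to per-slot closed forms. The equal split in Case 1 and the slotwise tightness in Case 2 are not forced by the KKT conditions alone, because an LP generally has vertex solutions. I would justify them by noting that any $\pmb{\tau}$ meeting the tight constraint with equality attains the same objective on that face. Concretely, in Case 1 the objective is pinned by $G=0$ and the $H$ constraint, so the symmetric representative is optimal. The remaining point is to make sure the accompanying multipliers remain nonnegative and that the $H$ constraint ($\ge S$) stays satisfied, which I expect to require assuming the problem is feasible.
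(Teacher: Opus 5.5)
Your route is the paper's own. It is a KKT active-set analysis in which the $G$-active branch (the paper's Case 6) gives \eqref{42a}, with $\bar{R}_d^{nm}>\bar{R}_u^{nm}$ coming from $\lambda_1\ge 0$. The $I$-active branch (Case 8) gives \eqref{42b} with $\lambda_3>0$. Your monotonicity observation also dismisses $H$ more cleanly than the paper's Case 7.

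The gap is the step you flag as the main obstacle, and your proposed resolution is false. On the face $G=0$ the objective is not pinned. There $\sum_n \tau^{nm}\delta_t\bar{R}_d^{nm} = c_rS+\sum_n\tau^{nm}\delta_t\bar{R}_u^{nm}$, which changes along the face unless $\bar{R}_u^{nm}/(\bar{R}_d^{nm}-\bar{R}_u^{nm})$ takes the same value for every $n$. Otherwise you can move along the face toward slots with a larger ratio and strictly increase the objective. So the equal split is not optimal on that face; the maximizer is a knapsack-type vertex. The constraint $H$ is only a lower bound and pins nothing.

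The multipliers show the same issue. Written per component, $\partial\mathcal{L}/\partial\tau^{nm}=0$ with $\lambda_2=\lambda_3=0$ and inactive box constraints forces $\lambda_1=\bar{R}_d^{nm}/(\bar{R}_d^{nm}-\bar{R}_u^{nm})$ for every $n$ at once. Your step ``choose $\lambda_1$ and check $\lambda_1\ge 0$'' is therefore consistent only under that same equal-ratio condition. Case 2 has the identical defect: slotwise tightness of the energy constraint is not forced. Shifting harvested energy toward slots with larger $\bar{R}_d^{nm}/(X_1+\delta_t(P_b+P_u^{nm}))$ can raise the objective without violating the energy constraint.

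The paper does not derive the per-slot forms from the LP either. It writes \eqref{38} as a single scalar equation in a common $\tau$, then splits $c_rS$ evenly through the factor $N$. That amounts to treating the per-slot coefficients as homogeneous. To reach \eqref{42a}--\eqref{42b}, you must state this simplification explicitly as a hypothesis: identical per-slot quantities, or the equal-ratio conditions above. Arguing that the symmetric point is optimal on the active face does not work.

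Separately, your enumeration leaves open the case where $\tau^{nm}\le 1$ is itself the binding constraint; the paper excludes it only by fiat ($\lambda_4=0$). Within the homogeneous setting you can close it rigorously. At $\tau^{nm}=1$ nothing is harvested while $E_{fly}^{nm}(\mathbf{q})+\delta_t(P_b+P_u^{nm})>0$, so $I<0$ there; equivalently, \eqref{42b} is always strictly below $1$.
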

\begin{proof}
    If $J(\pmb{\tau}) = 0$, it implies that $\tau^{nm}=1$, which is not a suitable solution because it would not maintain both data transmission and energy harvesting. Therefore, we deduce $\lambda_4=0$. To find a feasible solution, we examine all possible cases as follows
    
    \underline{Case 1:} $\lambda_1=0$,  $\lambda_2=0$, $\lambda_3=0$. According to \eqref{38}, $\sum_{n \in \mathcal{N}}{\delta_{t}}{\bar{R}}_{d}^{nm}=0$, which is not a reasonable outcome. Therefore, this case is not possible.

    \underline{Case 2:} $\lambda_1=0$,  $H(\pmb{\tau})=0$, $I(\tau)=0$. From $H(\pmb{\tau})=0$, it follows that $\tau^{nm}=\frac{S}{N\delta_{t}\bar{R}_{d}^{nm}}$. Similarly, from $I(\pmb{\tau})=0$, we find $\tau^{nm}=\frac{X_{1}-E_{fly}^{nm}(\mathbf{q})}{X_{1}+\delta_{t}\left( P_{b} + P_{u}^{nm} \right)}$. This results in two different optimal values for $\tau$, which is contradictory. Therefore, this case is not possible.

    \underline{Case 3:} $G(\pmb{\tau})=0$,  $\lambda_2=0$, $I(\pmb{\tau})=0$.
    
    \underline{Case 4:} $G(\pmb{\tau})=0$,  $H(\pmb{\tau})=0$, $\lambda_3=0$.

    As Case 2, Case 3 and Case 4 also get two different values for $\tau$, these cases are not possible.

    \underline{Case 5:} $G(\pmb{\tau})=0$,  $H(\pmb{\tau})=0$, $I(\pmb{\tau})=0$.  In this case, we get up to three different $\tau$ values, which is absurd. As a result, this case is not possible.

    \underline{Case 6:} $G(\pmb{\tau})=0$,  $\lambda_2=0$, $\lambda_3=0$. From  \eqref{38}, we find 
    $\lambda_1=\frac{\sum_{n \in \mathcal{N}}{\delta_{t}}{\bar{R}}_{d}^{nm}}{\sum_{n \in N}\delta_{t}{\bar{R}}_{d}^{nm} -\sum_{n \in N}^{}{\delta_{t}\bar{R}}_{u}^{nm}}$. If 
    $\bar{R}_{u}^{nm}=\bar{R}_{d}^{nm}$, then 
    $\lambda_1=+\infty$. If $\bar{R}_{u}^{nm}>\bar{R}_{d}^{nm}$, then $\lambda_1<0$. Both scenarios are unreasonable. If $\bar{R}_{u}^{nm}<\bar{R}_{d}^{nm}$, then $\lambda_1>1$. Additionally, given $G(\pmb{\tau})=0$, one has 
    \begin{equation}
        \tau^{nm}=\frac{c_r S}{N\delta_{t}(\bar{R}_{d}^{nm}-\bar{R}_{u}^{nm})}.\label{40}
    \end{equation} 
    According to \eqref{35}, the optimal solution $\left \{\tau_n^{m*}\right \}$ can be determined when $\bar{R}_{u}^{nm}<\bar{R}_{d}^{nm},\forall n \in \mathcal{N},\forall m \in \mathcal{M}$.

    \underline{Case 7:} $\lambda_1=0$,  $H(\pmb{\tau})=0$, $\lambda_3=0$. From  \eqref{38}, $\lambda_2=-1$ is not a reasonable outcome. Thus, this case is not possible.

    \underline{Case 8:} $\lambda_1=0$,  $\lambda_2=0$, $I(\pmb{\tau})=0$.  From \eqref{38}, we have
    \begin{equation}
     \lambda_3=\frac{\sum_{n \in \mathcal{N}}{\delta_{t}}{\bar{R}}_{d}^{nm}}{\sum_{n \in N}^{}X_1+\sum_{n \in N}^{}{ \delta_{t}\left( P_{b} + P_{u}^{nm} \right) }}\geq 0.   
    \end{equation}
    Additionally, for given $I(\pmb{\tau})=0$, we have 
    \begin{equation}
        \tau^{nm}=\frac{X_1-E_{\mathrm{fly}}^{nm}\left ( \mathbf{q} \right )}{X_1+\delta _t\left ( P_b + P_{u}^{nm} \right )}. \label{42}
    \end{equation}
We can derive \eqref{35} by combining \eqref{40}, \eqref{42}, and constraint \eqref{30a}, completing the proof of Theorem~\ref{theo1}.
\end{proof}}
The closed-form solution for $\tau_{nm}^\ast$ derived via the unconstrained KKT derivatives is physiologically valid only if it strictly resides within the interval $(0, 1)$. To ensure uninterrupted mathematical continuity, our algorithm implements a strict boundary safeguard: if the computed interior point violates the physical constraints $\tau_{nm}^\ast \notin (0, 1)$, the solution is set to the boundary that maximizes instantaneous EH, securing continuous constraint enforcement without algorithmic collapse. In which $E_{\mathrm{fly}}^{nm}(\mathbf{q})$, $\bar{R}_{u}^{nm}$, $\bar{R}_{d}^{nm}$, $X_{1}$ are updated each time the loop calculates $\tau^{nm}$. To attain the optimal result, the solution that yields the maximum value of the objective function of $\mathcal{P}_{1}^{\pmb{\tau}}$ is selected. The feasibility ensures the optimal values obtained from \eqref{42a} and \eqref{42b} lie within their bounds.

\subsection{Proceed With Trajectory Optimization}
{ For  given $\pmb{\tau}$ and $\mathbf{P}$, UBDs's trajectories $\mathbf{q}$ can be determined by solving the following problems:
\begin{subequations}
\label{eq:main}
    \begin{align}   
    &\mathcal{P}_{1}^{\mathbf{q}}:\underset{ \mathbf{q}}{\mathrm{max}} \ \sum\nolimits_{m \in \mathcal{M}}^{} \sum\nolimits_{n \in \mathcal{N}}^{}{\tau^{nm}\delta_{t}\bar{R}_{d}^{nm}} \\
    &\mbox{s.t.} \sum\nolimits_{n \in N}^{}{\tau^{nm}\delta_{t}\bar{R}}_{u}^{nm} +  c_r S 
    \geq \sum\nolimits_{n \in N}^{}{\tau^{nm}\delta_{t}\bar{R}}_{d}^{nm}, \forall m \in \mathcal{M},\\
    &\sum\nolimits_{n \in N}^{}{\tau^{nm}\delta_{t}\bar{R}_{d}^{nm}}  \geq S,\forall m \in \mathcal{M}, \\
    & \sum\nolimits_{i = 1}^{n}{\left( E_{\mathrm{fly}}^{nm}(\mathbf{q}) + \tau^{nm}\delta_{t}\left( P_{b} + P_{u}^{nm} \right) \right) } \nonumber \\
    &\quad \quad \quad \quad \quad \leq \sum\nolimits_{i = 1}^{n}\frac{\mu(1 - \tau^{nm})\delta_{t}w_{0}P_{\mathrm{WPT}}}{{(d_{su}^{nm})}^{\alpha}}, \forall m \in \mathcal{M},\\
        &\| \mathbf{q}^{(n + 1)m} - \mathbf{q}^{nm} \|  \leq \delta_{d} = V_{\max}\delta_{t}, \forall m \in \mathcal{M},\forall n \in \mathcal{N},\label{43e}\\
        &\mathbf{q}^{0m} = \mathbf{q}^{Im},\ \mathbf{q}^{nm} = \mathbf{q}^{Fm}, \forall m \in \mathcal{M}.\label{43f}
\end{align}
\end{subequations}
The non-convexity of problem $\mathcal{P}_{1}^{\mathbf{q}}$ complicates solving it with standard optimization methods.} To address this, the BCD is utilized. Introducing slack variables $a^{nm}$ and $b^{nm}$, where $ \| \mathbf{q}^{nm} - \mathbf{W}_{s}  \|^{\alpha} \leq a^{nm}$ and $ \| \mathbf{q}^{nm} - \mathbf{W}_{d}  \|^{\alpha} \leq b^{nm}$, enhances tractability. These variables, denoted as $\mathbf{z} \triangleq \{ a^{nm}, b^{nm}, n \in \mathcal{N} \}$, allow Problem $\mathcal{P}_{1}^{\mathbf{q}}$ to be rewritten as 
\begin{subequations}
\label{eq:main}
    \begin{align}    
&\mathcal{P}_{1.1}^{\mathbf{q}}:\underset{\mathbf{q},\mathbf{z}}{\mathrm{max}} \ B \sum_{m \in \mathcal{M}}^{}\sum_{n \in \mathcal{N}}^{}{\tau^{nm}\delta_{t}\log_{2}\left( 1 + \frac{\theta(\Theta +  {{\bar{P}}_{u}^{nm}a^{nm}})}{a^{nm}b^{nm}} \right)}
\\
        &\mbox{s.t.} \ \  \| \mathbf{q}^{nm} - \mathbf{W}_{s}  \|^{\alpha} \leq a^{nm},\forall m \in \mathcal{M},\forall n \in \mathcal{N},\\
        &\quad \quad  \| \mathbf{q}^{nm} - \mathbf{W}_{d}  \|^{\alpha} \leq b^{nm},\forall m \in \mathcal{M},\forall n \in \mathcal{N},\\
         &B\sum\nolimits_{n \in N}{\tau^{nm}\delta_{t}\log_{2}{\left( 1 + \frac{e^{- E}w_{0}P_{s}^{nm}}{a^{nm} \sigma_{u}^{2}} \right) }} +  c_r S  \nonumber\\
        & \geq B\sum\nolimits_{n \in N} {\tau^{nm}\delta_{t}\log_{2}{\left( 1 + \frac{\theta(\eta_{u}^{n}w_{0}P_{s}^{nm}  +  {{\bar{P}}_{u}^{nm}a^{nm}})}{a^{nm}b^{nm}} \right) }},\nonumber\\
        &\quad \quad \quad \quad \forall m \in \mathcal{M},
\\
       &B\sum\nolimits_{n \in N}{\tau^{nm}\delta_{t}\log_{2}{\left( 1 + \frac{\theta(\eta_{u}^{n}w_{0}P_{s}^{nm} + {{\bar{P}}_{u}^{nm}a^{nm})}}{a^{nm}b^{nm}} \right) }} \nonumber \\
       &\quad \quad \quad \geq S,\forall m \in \mathcal{M}, 
\\
    &\sum\nolimits_{i = 1}^{n}{\left( E_{fly}^{nm}(\mathbf{q}) + \tau^{nm}\delta_{t}\left( P_{b} + P_{u}^{nm} \right) \right) } \nonumber\\
        &\quad \quad \quad \leq \sum\nolimits_{i = 1}^{n}\frac{\mu(1 - \tau^{nm})\delta_{t}w_{0}P_{WPT}}{a^{nm}},\forall m \in \mathcal{M},
\\
        &\eqref{43e}, \eqref{43f},
\end{align}
\end{subequations}
where $\Theta=\eta_{u}^{n}\omega_{0} P_{s}^{nm}$. Because the problem remains challenging to solve, it is further handled by using the following established lemmas.
\begin{lemma} \label{lemma2}
For each $a^{nm}_j$ and $b_{j}^{nm}$ in the jth-loop, applying the first-order Taylor approximation, one obtains
\end{lemma}
\begin{align}
    &\log_{2}{\left( 1 + \frac{e^{- E}w_{0}P_{s}^{nm}}{a^{nm}\sigma_{u}^{2}} \right)\ } \geq \log_{2}{\left( 1 + \frac{e^{- E}w_{0}P_{s}^{nm}}{a^{nm}_j\sigma_{u}^{2}} \right)\ }\nonumber\\
  &-\frac{e^{- E}w_{0}P_{s}^{nm}(a^{nm} - a^{nm}_j)}{a^{nm}_j\left( a^{nm}_j\sigma_{u}^{2} + e^{- E}w_{0}P_{s}^{nm} \right)\ln 2}
  \triangleq \Theta_{1},\\
    &\log_{2}\left( 1 + \frac{\theta(\eta_{u}^{n}\omega_{0}P_{s}^{nm} + \ {{\bar{P}}_{u}^{nm}a^{nm}})}{a^{nm}b^{nm}} \right) \nonumber\\
  &\geq
  \log_{2}\left( 1 + \theta\frac{\eta_{u}^{n}\omega_{0}P_{s}^{nm} + \ {{\bar{P}}_{u}^{nm}a^{nm}_j}}{a^{nm}_jb_{j}^{nm}} \right) \nonumber \\
  &- \frac{\theta\eta_{u}^{n}\omega_{0}P_{s}^{nm}(a^{nm} - a^{nm}_j)}{a^{nm}_j\left( \theta\eta_{u}^{n}\omega_{0}P_{s}^{nm} + a^{nm}_j \left( \theta{{\bar{P}}_{u}^{nm}} + b_{j}^{nm} \right) \right) \ln 2} \nonumber\\
  &-\frac{\theta(\eta_{u}^{n}\omega_{0}P_{s}^{nm} + {{\bar{P}}_{u}^{nm}a^{nm}_j)(b^{nm} - b_{j}^{nm})}}{b_{j}^{nm}\left( \theta\eta_{u}^{n}\omega_{0}P_{s}^{nm} + a^{nm}_j \left( \theta{{\bar{P}}_{u}^{nm}} + b_{j}^{nm} \right) \right) \ln2}\triangleq
  \Theta_{2}.
\end{align}
\begin{proof}
At any given feasible points $x_j$ and $y_j$,
Taylor approximation is used to approximate these convex functions \cite{9723521}, ${\log}_{2}\left ( 1+\frac{T_1}{x} \right )$ and ${\log}_{2}\left ( 1+\frac{T_2+T_3x}{xy} \right )$$\left ( x, y \geq 0 \right )$ as
\begin{align}
        &\log_2\left ( 1+\frac{T_1}{x} \right ) \geq \log_2\left ( 1+\frac{T_1}{x^{j}} \right )\nonumber \\ 
        &- \frac{T_1}{x^{j}\left ( x^{j}+T_1 \right )\ln{2}}\left ( x-x^{j} \right ),
\end{align}
\begin{align}
        &\log_2\left ( 1+\frac{T_2+T_3x}{xy} \right ) \geq \log_2\left ( 1+\frac{T_2+T_3x^{j}}{x^{j}y^{j}} \right ) \nonumber \\
        &- \frac{T_2\left ( x-x^{j} \right )}{x^{j}\left ( T_2+x^{j}\left ( T_3+y^{j} \right ) \right )\ln{2}}-\frac{\left ( T_2+T_3x^{j} \right) \left ( y-y^{j} \right )}{y^{j}\left ( T_2+x^{j}\left ( T_3+y^{j} \right ) \right )\ln{2}}.
\end{align}
Applying $T_{1}\triangleq{e}^{-{E}}{\omega}_{0}P_{s}^{nm}$, $x\triangleq a^{nm}$, $y\triangleq b^{nm}$, $T_{2}\triangleq\Theta {\eta}_{u}^{n}{\omega}_{0}P_{s}^{nm}$ and $T_{3}\triangleq\Theta P_{u}^{nm}$, we have proved the lemma.
\end{proof}

\begin{lemma} \label{lemma3}
The first-order Taylor approximation is applied to each $a^{nm}_j$ in the $j$-th iteration, resulting in
\begin{equation}
        \frac{1}{a^{nm}}\geq \frac{1}{a^{nm}_j}-\frac{1}{\left ( a^{nm}_j \right )^{2}}\left ( a^{nm}-a^{nm}_j \right )\triangleq \widetilde{a}^{nm}.
\end{equation}
\end{lemma}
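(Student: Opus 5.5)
The inequality in Lemma~\ref{lemma3} is the first-order (tangent-line) lower bound for the convex function $f(x)=1/x$ on the positive half-line, so I would prove it directly from convexity. The slack variable satisfies $a^{nm}\geq \|\mathbf{q}^{nm}-\mathbf{W}_s\|^{\alpha}>0$, because the UBD never coincides with the BS. The iterate $a^{nm}_j$ comes from the previous feasible point and is therefore also strictly positive. So both points lie in the domain $(0,+\infty)$, where $f$ is twice differentiable with $f''(x)=2/x^{3}>0$.

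First, I would invoke the standard first-order characterization of convexity for differentiable functions (e.g., \cite{boyd2004convex}): $f(x)\geq f(x_j)+f'(x_j)(x-x_j)$ for all $x,x_j$ in the domain. Substituting $f'(x_j)=-1/x_j^{2}$ with $x=a^{nm}$ and $x_j=a^{nm}_j$ yields exactly $\widetilde{a}^{nm}$.

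As a self-contained alternative that avoids citing the convexity characterization, I would expand the difference explicitly:
\begin{equation*}
\frac{1}{a^{nm}}-\frac{1}{a^{nm}_j}+\frac{a^{nm}-a^{nm}_j}{(a^{nm}_j)^{2}}=\frac{(a^{nm}-a^{nm}_j)^{2}}{a^{nm}\,(a^{nm}_j)^{2}}\geq 0 .
\end{equation*}
This identity follows by placing everything over the common denominator $a^{nm}(a^{nm}_j)^{2}$. The numerator becomes $(a^{nm}_j)^{2}-a^{nm}a^{nm}_j+a^{nm}(a^{nm}-a^{nm}_j)=(a^{nm}-a^{nm}_j)^{2}$. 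The identity also shows that equality holds at $a^{nm}=a^{nm}_j$, so the surrogate is tight at the current iterate, which is the property SCA needs.

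There is no real obstacle here. The only point requiring care is positivity of $a^{nm}$, because the bound fails for negative arguments where $1/x$ is concave. I would secure this by noting that the constraint $\|\mathbf{q}^{nm}-\mathbf{W}_s\|^{\alpha}\leq a^{nm}$ together with a strictly positive minimum BS--UBD distance (e.g., the nonzero flight altitude) keeps every feasible $a^{nm}$, and every iterate $a^{nm}_j$, strictly positive.

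Finally, I would remark how the bound is used. Replacing $1/a^{nm}$ by the affine $\widetilde{a}^{nm}$ on the harvested-energy side of the energy constraint in $\mathcal{P}_{1.1}^{\mathbf{q}}$ shrinks the right-hand side. The resulting constraint is therefore an inner convex approximation, and feasibility is preserved across iterations.
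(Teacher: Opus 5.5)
Your proposal is correct and follows the same route as the paper, which justifies the bound only by invoking the first-order Taylor expansion of $1/a^{nm}$ at $a^{nm}_j$, i.e., the tangent-line under-estimator of the convex function $1/x$ on $(0,+\infty)$. Your explicit gap identity $\frac{(a^{nm}-a^{nm}_j)^{2}}{a^{nm}(a^{nm}_j)^{2}}\geq 0$ and your observation that positivity of $a^{nm}$ is the one hypothesis that matters make explicit what the paper leaves implicit.
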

The formulation of $E_{\mathrm{fly}}^{nm}(\mathbf{q})$ constrained in \eqref{11} makes the problem $\mathcal{P}_{1.1}^{\mathbf{q}}$ is still non-convex. To address this issue, a slack variable $y^{nm}$ is introduced as
\begin{equation}
        \sqrt{{\delta_{t}}^{4} + {k_{2}}^{2}(\Delta ^{nm})^{4}} - k_{2} (\Delta ^{nm})^{2} \leq  (y^{nm})^{2},\forall n \in \mathcal{N}.
\end{equation}
This results in a constraint 
        ${\delta _{n}^{4}}/{(y^{nm})^{2}}\leq (y^{nm})^{2}+2k_2(\Delta ^{nm})^{2}$,
which is non-convex. Applying the first-order Taylor approximation, we get:
\begin{equation}
    \begin{aligned}
        &\frac{{\delta_{t}}^{4}}{(y^{nm})^{2\ }} \leq
         2 y_{j}^{nm}\left ( y^{nm}-
        y_{j}^{nm} \right )-2k_{2}
        \| \mathbf{q}_{j}^{(n + 1)m} - \mathbf{q}_{j}^{nm} \|^{2} + \\
        & {(y_{j}^{nm})}^{2} +4k_{2}
        { ( \mathbf{q}_{j}^{(n + 1)m} - \mathbf{q}_{j}^{nm} )}^{T} ( \mathbf{q}^{(n + 1)m} - \mathbf{q}^{nm} ), \forall n \in \mathcal{N},
    \end{aligned}
\end{equation}
$E_{\mathrm{fly}}^{nm}(\mathbf{q})$ can be replaced with its upper bound $\bar{E}_{\mathrm{fly}}^{nm} (\mathbf{q})$ as
\begin{align}
        &E_{\mathrm{fly}}^{nm}(\mathbf{q})\leq P_0(\delta _t+k_1(\Delta ^{nm})^{2})+P_1y^{nm}+\frac{k_3(\Delta ^{nm})^{3}}{\delta _{t}^{2}}
        \nonumber\\
        &\quad \quad \quad \quad \triangleq \bar{E}_{\mathrm{fly}}^{nm}(\mathbf{q}).
\end{align}
{By synthesizing Lemma~\ref{lemma2}, Lemma~\ref{lemma3}, and the previous manipulations, in the $j$-th iteration, the approximate convex problem is expressed as follows:
\begin{subequations}
\label{eq:main}
    \begin{align}
&\mathcal{P}_{1.2}^{\mathbf{q}}:\underset{\mathbf{q},\mathbf{z}}{\mathrm{max}} \ B \sum_{m \in \mathcal{M}}^{}\sum_{n \in \mathcal{N}}^{}{\tau^{nm}\delta_{t}\Theta_{2}} \\
       & \mbox{s.t.} \ \ (43e), (43f), (44b), (44c), \\
         &B\sum_{n \in N}^{}{\tau^{nm}\delta_{t}\Theta_{1}} +  c_r S 
        \geq B\sum_{n \in N}^{}{\tau^{nm}\delta_{t}\Theta_{2}},\forall m \in \mathcal{M}, \\
        &B\sum_{n \in N}^{}{\tau^{nm}\delta_{t}\Theta_{2}}  \geq S, \forall m \in \mathcal{M}, \\
        &\frac{{\delta_{t}}^{4}}{(y^{nm})^{2\ }} \leq
        {(y_{j}^{nm})}^{2} + 2 y_{j}^{nm}\left ( y^{nm}-
        y_{j}^{nm} \right ) \nonumber\\
        &-2k_{2}
        \| \mathbf{q}_{j}^{(n + 1)m} - \mathbf{q}_{j}^{nm} \|^{2} +  4
        k_{2}( \mathbf{q}_{j}^{(n + 1)m} - \mathbf{q}_{j}^{nm}  )^T  \nonumber \\
        &
          \times ( \mathbf{q}^{(n + 1)m} - \mathbf{q}^{nm} ), \forall n \in \mathcal{N},\forall m \in \mathcal{M}, \\
        &\sum\nolimits_{i = 1}^{n}{\left( \bar{E}_{\mathrm{fly}}^{nm}(\mathbf{q}) + \tau^{nm}\delta_{t}\left( P_{b} + P_{u}^{nm} \right) \right) }  \leq\nonumber\\
        &\quad \quad  \sum\nolimits_{i = 1}^{n}{\mu(1 - \tau^{nm})\delta_{t}w_{0}P_{\mathrm{WPT}}}{\widetilde{a}^{nm}}\forall m \in \mathcal{M}.
\end{align}
\end{subequations}

Standard optimization methods can directly solve $\mathcal{P}_{1.2}^{\mathbf{q}}$ since the objective and all constraints are convex \cite{boyd2004convex}, \cite{boyd2002advances}.}
\subsection{Optimize The Total Transmit Power}
{The objective is recalculated with optimal variables $P_{s}^{nm}$, $P_{u}^{nm}$, adhering to these constraints:
\begin{subequations}
\label{eq:main}
\begin{align}
&\mathcal{P}_{1}^{\mathbf{P}}:\underset{\mathbf{P}}{\mathrm{max}} \  \sum\nolimits_{m \in \mathcal{M}}^{} \sum\nolimits_{n \in \mathcal{N}}^{}{\tau^{nm}\delta_{t}}{\bar{R}}_{d}^{nm}, \\
    &\mbox{s.t.} \ \sum_{n \in N} {\tau^{nm}\delta_{t}{\bar{R}}_{u}^{nm}} +  c_r S  \geq \sum_{n \in N}{\tau^{nm}\delta_{t}{\bar{R}}_{d}^{nm}}, \forall m \in \mathcal{M},\\
    &\sum\nolimits_{n \in N}^{}{\tau^{nm}\delta_{t}{\bar{R}}_{d}^{nm}}  \geq S, \forall m \in \mathcal{M},\\
    &\sum\nolimits_{i = 1}^{n}{\left( E_{\mathrm{fly}}^{nm}(\mathbf{q}) + \tau^{nm}\delta_{t}\left( P_{b} + P_{u}^{nm} \right) \right) } \nonumber\\
    &\quad \quad \quad \quad \quad \leq \sum_{i = 1}^{n}\frac{\mu(1 - \tau^{nm})\delta_{t}w_{0}P_{\mathrm{WPT}}}{{(d_{su}^{nm})}^{\alpha}}, \forall m \in \mathcal{M}, \\
    &0 \leq P_s^{nm} \leq P_s^{\max}, \forall m \in \mathcal{M},\forall n \in \mathcal{N},\\
    &0 \leq P_u^{nm} \leq P_u^{\max}, \forall m \in \mathcal{M},\forall n \in \mathcal{N}.
\end{align}
\end{subequations}
To simplify solving the optimization problem, we will transform the involved functions into convex functions with respect to the optimization variable.} For ease of notation, we define $l_{1}^{nm}\triangleq \| \mathbf{q}^{nm} - \mathbf{W}_{s} \|^{\alpha}$ and $l_{2}^{nm}\triangleq \| \mathbf{q}^{nm} - \mathbf{W}_{d} \|^{\alpha}$.
\begin{lemma} \label{lemma4}
For each $P_{s,j}^{nm}$ and $P_{u,j}^{nm}$ in the jth-loop, applying the first-order Taylor approximation, we get:
\begin{align}
        &\log_{2}{\left( 1 + \frac{\theta P_{s}^{nm}}{l_{1}^{nm}} \right)\ } \cong
  \log_{2}{\left( 1 + \frac{\theta P_{s,j}^{nm}}{l_{1}^{nm}} \right)\ } \nonumber\\  &+
  \frac{\theta (P_{s}^{nm} - P_{s,j}^{nm})}{\left( l_{1}^{nm} + \theta P_{s,j}^{nm} \right)\ln2}
  \triangleq
  {\widetilde{\Theta}}_{1},
\\
        &\log_{2}\left( 1 + \frac{\theta(\eta_{u}^{n}\omega_{0}P_{s}^{nm} + \ {\bar{P}}_{u}^{nm} l_{1}^{nm})}{l_{1}^{nm}l_{2}^{nm}} \right) \nonumber\\
  &\cong
  \log_{2}\left( 1 + \frac{\theta(\eta_{u}^{n}\omega_{0}P_{s,j}^{nm} + \ {\bar{P}}_{u,j}^{nm}l_{1}^{nm}}{l_{1}^{nm}l_{2}^{nm}} \right) \nonumber\\
  &-
  \frac{\theta\eta_{u}^{n}\omega_{0}{(P}_{s}^{nm} - P_{s,j}^{nm}) + \theta l_{1}^{nm}\left\lceil c_r \right\rceil{(P}_{u}^{nm} - P_{u,j}^{nm})}{(l_{1}^{nm}l_{2}^{nm} + \theta \eta_{u}^{n}\omega_{0}P_{s,j}^{nm} + \theta l_{1}^{nm}\left\lceil c_r \right\rceil P_{u,j}^{nm}) \ln2}\triangleq
  {\widetilde{\Theta}}_{2}.
\end{align}
\end{lemma}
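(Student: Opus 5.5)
Lemma~\ref{lemma4} is a plain first-order Taylor linearization, so I will reduce both expansions to the derivative of a single scalar function, $\phi(z)=\log_2(1+z)$. Its derivative is $\phi'(z)=1/((1+z)\ln 2)$.

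In subproblem $\mathcal{P}_1^{\mathbf{P}}$ the trajectory $\mathbf{q}$ is fixed. Hence $l_1^{nm}$ and $l_2^{nm}$ are positive constants, and so are $\theta$, $\eta_u^n$, $\omega_0$ and $\lceil c_r\rceil$. Each logarithm is therefore $\phi$ composed with an \emph{affine} function of the power variables. The Taylor expansion of $\phi\circ g$ about the previous iterate $(P_{s,j}^{nm},P_{u,j}^{nm})$ is then just $\phi(g_j)+\phi'(g_j)\,(g-g_j)$, where $g_j$ denotes $g$ evaluated at that iterate.

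For the first expression I take $g_1(P_s^{nm})=\theta P_s^{nm}/l_1^{nm}$. Then
\[
\phi'(g_{1,j})\,(g_1-g_{1,j})=\frac{1}{(1+\theta P_{s,j}^{nm}/l_1^{nm})\ln 2}\cdot\frac{\theta(P_s^{nm}-P_{s,j}^{nm})}{l_1^{nm}}.
\]
Multiplying numerator and denominator by $l_1^{nm}$ gives exactly the correction term in $\widetilde\Theta_1$.

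For the second expression I set
\[
g_2=\frac{\theta(\eta_u^n\omega_0P_s^{nm}+\lceil c_r\rceil P_u^{nm}l_1^{nm})}{l_1^{nm}l_2^{nm}},
\]
using $\bar P_u^{nm}=P_u^{nm}\lceil c_r\rceil$. Its partial derivatives are $\theta\eta_u^n\omega_0/(l_1^{nm}l_2^{nm})$ with respect to $P_s^{nm}$ and $\theta l_1^{nm}\lceil c_r\rceil/(l_1^{nm}l_2^{nm})$ with respect to $P_u^{nm}$. Multiplying by $\phi'(g_{2,j})$ and clearing the common factor $l_1^{nm}l_2^{nm}$ gives the denominator $(l_1^{nm}l_2^{nm}+\theta\eta_u^n\omega_0P_{s,j}^{nm}+\theta l_1^{nm}\lceil c_r\rceil P_{u,j}^{nm})\ln 2$. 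The numerator is the stated linear combination of the increments $P_s^{nm}-P_{s,j}^{nm}$ and $P_u^{nm}-P_{u,j}^{nm}$.

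That the resulting linear term appears with a minus sign in $\widetilde\Theta_2$ is a sign convention, and I would flag this: the computation yields a plus. I would prove the statement with the sign the derivative actually gives.

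The main point of care is the \emph{direction} of the approximation, which is why the lemma writes $\cong$ rather than an inequality. Because $\phi$ is concave and $g_1,g_2$ are affine in $\mathbf{P}$, each Taylor expression is a global \emph{upper} bound, by the tangent-line inequality for concave functions. These bounds are tight at the current iterate and match its gradient there, which is what an SCA step needs.

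I would add a remark on each place the terms are used:
\begin{itemize}
\item In the objective and in constraint (29c), the true function is itself concave in $\mathbf{P}$, so these can be kept exactly.
\item In the right-hand side of (29b), the linearization is used where convexity is needed, so feasibility is preserved.
\end{itemize}
Establishing this tangent-bound property and its consequence for monotonicity of the BCD iterations is the step I would treat most carefully. The rest is routine differentiation.
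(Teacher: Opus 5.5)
Your proposal is correct and follows the paper's route. The paper likewise writes the two logarithms as $\log_2(1+T_1u)$ and $\log_2(1+T_2u+T_3t)$, with $T_1=\theta/l_1^{nm}$, $T_2=\theta\eta_u^n\omega_0/(l_1^{nm}l_2^{nm})$ and $T_3=\lceil c_r\rceil\theta/l_2^{nm}$, and takes their first-order expansions, which is your $\phi\circ g$ chain-rule computation. Your sign flag is also right: the paper's own generic expansion carries a $+$, so the minus in $\widetilde{\Theta}_2$ is a typo in the statement. Your remark that these tangent expressions are global upper bounds (not the lower bounds the paper later invokes for monotonicity) is a correct observation that goes beyond what the paper proves.
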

\begin{proof}
At any given feasible points $u_j$ and $t_j$,
Taylor approximation is used to approximate these functions, ${\log}_{2}\left ( 1+{T_1}{u} \right )$ and ${\log}_{2}\left ( 1+T_2u+T_3t \right )$$\left ( u, t \geq 0 \right )$, individually:
\begin{align}
    &\log_{2}(1+T_1u)\cong  \log_{2}(1+T_1u_j)+\frac{T_1}{(1+T_1u_j)\ln{2}}(u-u_j), \\
    &\log_{2}(1+T_2u+T_3t)\cong  \log_{2}(1+T_2u_j+T_3t_j)\nonumber\\
    & +\frac{T_2(u-u_j)+T_3(t-t_j)}{(1+T_2u_j+T_3t_j)\ln{2}}.
\end{align}
By applying  $T_1\triangleq\theta /l_{1}^{nm},\ T_2\triangleq\theta \eta \omega_0/l_{1}^{nm}l_{2}^{nm},\ T_3\triangleq\left \lceil c_r  \right \rceil\theta/ l_{2}^{nm},\ u\triangleq P_{s}^{nm}$, and $\ t\triangleq P_{u}^{nm}$, we have proved Lemma~\ref{lemma4}.
\end{proof}
Applying Lemma~\ref{lemma4}, Problem $\mathcal{P}_{1}^{\mathbf{P}}$ is reformulated as 
\begin{subequations}
\label{eq:main}
\begin{align}
    &\mathcal{P}_{1.1}^{\mathbf{P}}:\underset{\mathbf{P}}{\mathrm{max}} \ B \sum_{m \in \mathcal{M}}^{}\sum_{n \in \mathcal{N}}^{}{\tau^{nm}\delta_{t} {\widetilde{\Theta}}_{2}} \\
        &\mbox{s.t.} \ B\sum_{n \in N} {\tau^{nm}\delta_{t}{\widetilde{\Theta}}_{1}} +  c_r S  
        \geq B\sum_{n \in N}^{}{\tau^{nm}\delta_{t}{\widetilde{\Theta}}_{2}},\forall m \in \mathcal{M},
\\
        &B\sum_{n \in N}^{}{\tau^{nm}\delta_{t}{\widetilde{\Theta}}_{2}}  \geq S, \forall m \in \mathcal{M},
\\
        &\sum_{i = 1}^{n}{\left( E_{\mathrm{fly}}^{nm}(\mathbf{q}) + \tau^{nm}\delta_{t}\left( P_{b} + P_{u}^{nm} \right) \right) }\nonumber \\
        &\quad \quad \quad \quad \quad \leq \sum_{i = 1}^{n}\frac{\mu(1 - \tau^{nm})\delta_{t}w_{0}P_{\mathrm{WPT}}}{{(d_{su}^{nm})}^{\alpha}},\forall m \in \mathcal{M},
\\
        &0 \leq P_s^{nm} \leq P_s^{\max}, \forall m \in \mathcal{M},\forall n \in \mathcal{N},\\
    &0 \leq P_u^{nm} \leq P_u^{\max}, \forall m \in \mathcal{M},\forall n \in \mathcal{N}.
\end{align}
\end{subequations}
Let the surrogate functions defined in the BCD serve as strict global lower bounds to the original non-convex constraints. Under the block-coordinate update rule, the objective value sequence generated by Algorithm~\ref{alg1} is monotonically non-decreasing. Since the maximum throughput is strictly upper-bounded by finite transmission power budgets and spectrum allocations, the sequence converges to a finite stationary point satisfying the KKT conditions. $\mathcal{P}_{1.1}^{\textbf{P}}$ can be directly solved using standard optimization methods because the objective function and all constraints are convex \cite{boyd2004convex,boyd2002advances}. While the BCD mathematically guarantees convergence to a stationary point, it relies on replacing the expected value of a non-convex rate function with a deterministic lower bound. Specifically, estimating the expected achievable data rate over probabilistic fading channels by substituting the instantaneous channel gain with its mean introduces a systematic overestimation due to Jensen's Inequality, mathematically defined as $\mathbb{E}[\log_2(1 + X)] \le \log_2(1 + \mathbb{E}[X])$. By evaluating a second-order Taylor series expansion of the logarithmic rate function,  this approximation error is proportional to the variance of the channel fading distribution. Consequently, these Taylor expansions remain exceptionally tight and theoretically sound in static topologies or high-SNR, LoS-dominant regimes where variance is suppressed. Under low-SNR conditions or in rapidly varying channels with high Doppler spreads and extreme UAV mobility, the variance increases drastically. In such hostile environments, the BCD bounds loosen severely, which may lead the deterministic solver to converge upon a mathematically valid but physically sub-optimal trajectory. 

The convergence of the BCD to a stationary point is theoretically guaranteed by satisfying three fundamental groups of conditions. Firstly, regarding the feasible sets, the multi-variable optimization space is fully decoupled into independent, convex, and closed constraint blocks (i.e., dynamic time splitting, UAV trajectory, and transmit power). Secondly, the original non-convex objective function (total system throughput) is continuous, possesses block-wise Lipschitz continuous gradients, and is fundamentally upper-bounded by the finite WPT energy budget and physical channel capacities. Thirdly, and most critically, the surrogate functions constructed via first-order Taylor expansions during the SCA procedure rigorously adhere to the following rules at any given local iteration point in the $j$-th loop: \textit{Value Matching} (the surrogate equals the original function value at the local point, \textit{Gradient Matching} (the first-order derivatives are identical, ensuring a valid ascent direction), and \textit{Global Lower Bound} (the surrogate acts as a tight concave lower bound for the original non-convex maximization objective across the feasible set). By strictly satisfying these mathematical conditions, the algorithm yields a monotonically non-decreasing sequence of total throughput that is mathematically guaranteed to converge to a stable stationary point.

\begin{algorithm}[t] 
\caption{Solve Problem $\mathcal{P}_{1}$ using the BCD}
\label{alg:bcd}
\begin{algorithmic}[1]
\REQUIRE  Set $j=0$, $i=0$ and initial values for $\textbf{q}^{i}$, $\pmb{\tau}^{i}$, $\textbf{P}^{j}$.
\REPEAT
\REPEAT
\STATE For a given $\textbf{q}^{i}$ and $\textbf{P}^{j}$, the problem $\mathcal{P}_{1}^{\pmb{\tau}}$ is solved, and the optimal solution is denoted as $\pmb{\tau}^{*}$.
\STATE For a given $\pmb{\tau}^{i+1}$ and $\textbf{P}^{j}$, the problem $\mathcal{P}_{1}^{\textbf{q}}$ is solved, and the optimal solution is denoted as $\textbf{q}^{*}$.
\STATE Update the local point $\pmb{\tau}^{i+1} = \pmb{\tau}^{\ast}$ and $\textbf{q}^{i+1} = \textbf{q}^{\ast}$.
\STATE Set $i \leftarrow i + 1$.
\UNTIL Convergence with $\epsilon > 0$.
\STATE For a given $\mathbf{q}^{\ast}$ and $\pmb{\tau}^{\ast}$, the problem $\mathcal{P}_{1}^{\mathbf{P}}$ is solved, and the optimal solution is denoted as $\mathbf{P}^{\ast}$.
\STATE Update the local point $\textbf{P}^{j+1} = \textbf{P}^{\ast}$.
\STATE Set $j \leftarrow j + 1$.
\UNTIL Convergence with $\epsilon > 0$.
\end{algorithmic} \label{alg1}
\end{algorithm} 
\subsection{Complexity Analysis}
The proposed closed-form expressions efficiently address Problem $\mathcal{P}{1}^{\pmb{\tau}}$, thereby resolving $\mathcal{P}{1}^{\mathbf{q}}$ and $\mathcal{P}{1}^{\mathbf{P}}$ as the primary subproblems in this context. Given that $\mathcal{P}{1}^{\mathbf{q}}$ exhibits a logarithmic structure, its complexity is determined as $\mathcal{O} ( L_1\left ( 3N \right )^{3.5} )$, where $L_1$ is the number of iterations required to update the UBD trajectory, and 3$N$ denotes the number of scalar variables involved \cite{8618602}. Therefore, the overall computational complexity of $\mathcal{P}{1}^{\mathbf{q}}$ and $\mathcal{P}{1}^{\pmb{\tau}}$ is $\mathcal{O} ( L_2L_1 ( 3N  )^{3.5} )$, where $L_2$ is the number of iterations needed for convergence to be achieved. Similarly, the complexity associated with solving $\mathcal{P}{1}^{\mathbf{q}}$ and $\mathcal{P}_{1}^{\mathbf{P}}$ is $\mathcal{O}\left ( L_3 ( 3N \right )^{3.5} )$, where $L_3$ represents the iterations required to update the transmit power and 3$N$ corresponds to the number of scalar variables. Thus, the total complexity of Algorithm~\ref{alg1} amounts to $\mathcal{O} ( L_4\left ( 3N \right )^{3.5}\left ( L_1L_2+L_3 \right ) )$, where $L_4$ is the number of iterations essential for convergence.\footnote{The mathematical problem formulation assumes a fixed number of UAVs to maintain resource block allocations and ensure the feasibility of the collision-avoidance constraints. If scenarios involve a dynamic number of UAVs across multiple time slots, the centralized controller must adapt to the changing topology by either halting and re-triggering the optimization solver for a new network configuration with an updated number of UAVs.}

\section{GA-based  solution} \label{Sec:Evol}
We present the GA to solve Problem $\mathcal{P}_1$, which begins by initializing a population of individuals and maintaining it throughout the search process. In each generation, a pair of individuals undergoes the crossover to create two new offspring subjected to the mutation. These offspring are evaluated and selected for the next generation based on their fitness.

\textit{Solution representation}: The variables of the problem are the coordinates in 3D space $(x, y, z)^{nm}$, $\tau^{nm}, P_{u}^{nm}, P_{s}^{nm}$. Here, $q^{nm}$ represents the position of a UBD at time slot $n$, and its position is defined by the coordinates $(x, y, z)^{nm}$. We have $N$ = $T$/ $\delta_{t}$ time slots, and in of these time slots, the values of $(x, y, z)^{nm}$, $\tau^{nm}, P_{u}^{nm}$, and $P_{s}^{nm}$ vary. Therefore, each UBD is a population $\mathcal{Q}$ consisting of $\mathbf{I}$ individuals, where each individual is a $6\times N$-dimensional vector $c$ = $\{c_{1}, c_{2}, \ldots, c_{6 \times N}\}$ of real numbers in the range $(0,1)$, representing the parameters of the problem at each time slot. 
To translate these values into real-world parameters, we multiply them by specific scaling factors chosen during the algorithm setup. Specifically, 
\begin{enumerate}
    \item Representing the coordinates $(x,y,z)^{nm}$ at the $n$-th time slots into the actual coordinates we get $x^{nm} = \{c_{1}, c_{2}, \ldots, c_{N}\} \times X$, $y^{nm} = \{c_{N+1}, c_{N+2}, \ldots, c_{2N}\} \times Y$, $z^{nm} = \{c_{2N+1}, c_{2N+2}, \ldots, c_{3N}\} \times Z$,
    where $X, Y, Z$ are the limit values of $(x, y, z)$ coordinates in 3D space.

    \item We set $\tau^{nm} = \{c_{3N+1}, c_{3N+2}, \ldots, c_{4N}\} \times t$, $P_{u}^{nm} = \{c_{4N+1}, c_{4N+2}, \ldots, c_{5N}\} \times P_{U}$, $P_{s}^{nm} = \{c_{5N+1}, c_{5N+2}, \ldots, c_{6N}\} \times P_{S}$, where $t$, $P_{U}$ and $P_{S}$ are the limit values  for  $\tau^{nm}$, $P_{u}^{nm}$ and $P_{s}^{nm}$.
    
\end{enumerate}

\textit{Fitness function}: Before initializing the population, we define the fitness function $\mathrm{Fitness}(c)_m$ of the $m$-th UBD as 
\begin{align}
&\mathrm{Fitness}(c)_m =\nonumber\\
&\max \sum_{n=1}^{N} \tau^{nm} \log_2\left(1 + \frac{\theta (n_u w_0 P_{s}^{nm} + {\bar{P}}_{u}^{nm} d_{su}^{nm})}{{(d_{su}^{nm})}^{\alpha} {(d_{du}^{nm})}^{\alpha}} \right).
\end{align}
Each individual within the population is assessed using the fitness function to determine its suitability. Individuals with higher fitness values are more likely to be selected for the next generation. According to the problem's objective, the function evaluates the solution to $\mathcal{P}_{1}$. However, for a feasible individual in the population, it must satisfy the following constraints
\begin{enumerate}
    \item The distance the UBD travels between two consecutive the $n$-th and $n+1$-th time slots must be less than the maximum distance the UBD can fly, as defined by \eqref{29f} and \eqref{29g}. 
    \item The total flying energy of the UBD $E_{\mathrm{fly}}^{nm}$ must be less than the total energy harvested, $E_{h}^{nm}$, as in \eqref{29d}.
    \item The data transmission rate received at the destination must exceed the transmission rate of the UBD in \eqref{29b}.
    \item Finally, the data  rate at the destination must be greater than the minimum rate  $S$, to ensure technical requirements and user needs \eqref{29c}.
\end{enumerate}

If an individual $c$ cannot satisfy the first three constraints, $\mathrm{Fitness}(c)_m$ = 0.  However, if it meets the first three constraints but fails the last one, the fitness value of this individual will be penalized with a penalty factor according to the requirements 
\begin{equation}
    \mathrm{Fitness}(c)_m \leftarrow p \times \mathrm{Fitness}(c)_m,
\end{equation}
where p is a penalty factor in the range of $(0, 1)$.
To enforce the constraints \eqref{29e} and \eqref{29e1}, we restrict the values of $P_{u}^{nm}$ and $P_{s}^{nm}$ from the outset of the algorithm. As explained, $P_{u}^{n}$ and $P_{s}^{n}$ for the $n$-th time slot are expressed as real numbers within the range of $(0, 1)$. Accordingly, we establish the initial constraints for $P_{u}^{nm}$, $P_{s}^{nm}$ based on the specified $P_{\max}$. Nevertheless, due to the UBD's limitation, $P_{u}^{nm}$ cannot exceed $P_{s}^{nm}$, necessitating an additional restriction on $P_{u}^{nm}$. To clarify, in this context,
\begin{align}
    &P_{u}^{nm} = \{c_{4N+1}, c_{4N+2}, \ldots, c_{5N}\} \times P_{U}, P_{U} \leq P_{U}^{\ast}, \\
    &P_{s}^{nm} = \{c_{5N+1}, c_{5N+2}, \ldots, c_{6N}\} \times P_{S}.
\end{align}
Here, $P_{U}$ and $P_{S}$ serve as the limit values to satisfy the aforementioned constraint. Additionally, $P_{U}^{\ast}$ is the upper bound imposed to restrict the value of $P_{U}$.

\subsection{Population Initialization}
In the population initialization, each individual  is represented by a gene that signifies a potential solution. The population size is predefined and significantly impacts the performance of the GA. The diversity of the population is crucial to ensure the algorithm explores the entire solution space and avoids local optima. During initialization, each randomly generated individual is evaluated using the fitness function. Specifically, if
   $ \mathrm{Fitness}(c_{i})_m > 0, \forall i \in  \{0,\ldots,I   \}$,
then $c_{i}$ is added to the population. Otherwise, the candidate is eliminated. To enhance the efficiency of creating suitable individuals, we incorporate initial constraints \eqref{11} and \eqref{29f} into the initialization process.

\textit{Crossover}: It is the process of combining two individuals (parents) to generate one or more offspring. This process mimics natural reproduction, in which genes from both parents are combined to produce offspring with genetic traits from both parents. The crossover introduces diversity into the population and aids in exploring new solutions. After initializing the initial population, the process of creating a new generation begins. First, we randomly select two individuals from the current population to serve as the parent individuals for the crossover process. We utilize crossover for binary arrays, specifically one-point crossover, to combine two parent individuals, $f_a$ and $m_o$, to create two offspring $c_{1}$ and $c_{2}$ as
    $c_{1} = u f_a + (1-u) m_o, 
    c_{2} = (1-u) f_a + u m_o$,
where $u$ is a real number in the range $(0,1)$ representing the crossover point. This process is repeated $I/2$ times to generate a sub-population $Q'$ consisting of individuals, with the total number of individuals in $Q'$ equal to $I$.

\textit{Mutation}: A random alteration of one or more genes is involved within an individual. This process simulates natural mutations, facilitating the creation of new traits and maintaining genetic diversity within the population. After the crossover, a population $Q’$ is obtained. With the number of mutated individuals denoted as $I_{1}$, the individuals in population $Q’$ undergo mutation via replacement mutation process as follows:
    $I_{1} = I \theta$,
where $\theta$ is the mutation coefficient. These individuals are mutated by selecting any position and replacing it with a random number in the range $[0,1)$. 

\textit{Selection}: Following the creation of offspring through the crossover and the mutation, these individuals are evaluated and selected to form the next generation. After the mutation, we merge the population $Q’$ with $Q$ to create a population $Q"$ consisting of $2I$ individuals. According to the design, only $I$ individuals are retained in the population. We rely on the fitness function to sort the individuals of $Q"$ in descending order. Then we employ hierarchical selection based on the fitness function to retain half the individuals. Specifically, we select $40\% I$ of the best individuals, $40\% I$ of the average individuals, and $20\% I$ of the worst individuals. An overview of this evolutionary algorithm is presented in Algorithm~\ref{Alg2}.

\begin{algorithm}[t]
\caption{Solve Problem $\mathcal{P}_{1}$ using GA for UBDs.}
\begin{algorithmic}[1]
\STATE Start by randomly initializing the initial population, denoted as $Q$ and $I$ individuals.
\STATE Evaluate the fitness of each individual in the population using the predefined fitness function $\mathrm{Fitness}(c)$.
\STATE Set the maximum number of generations $G_{\max}$ and initialize $G \gets 0$.
\FOR{$G \gets 0$ to $G_{\max}$}
\STATE Select parents from the initial population, $Q$.
\STATE Perform crossover to create a new population, $Q'$.
\STATE Apply mutation to the individuals in population $Q'$.
\STATE Combine population $Q'$ with the initial population $Q$, forming $Q"$. Calculate the fitness of the individuals in $Q"$, arranging them in descending order.
\STATE Select 50\% of the individuals from population $Q"$ to create a new population for the next iteration.
\ENDFOR
\end{algorithmic} \label{Alg2}
\end{algorithm}
\textit{Complexity Analysis}: Algorithm~\ref{Alg2} has the computational complexity depending on the genetic operators, the representation of the individuals, the population, and the fitness function. In total, the computational complexity is in the order of $\mathcal{O}(6NPG_{\max})$, where $G_{\max}$ is the number of generations, $P$ is the population size, and $6N$ is the size of the individuals. 

\section{Numerical Results} \label{Sec:Numer}
This section presents numerical results validating the performance of the proposed frameworks. All simulations are performed on a personal computer equipped with an AMD Ryzen 7 4800H CPU (2.9 GHz) and 16 GB RAM using MATLAB, noting that the BCD gets the support of the general-purpose CVX toolbox \cite{grant2009cvx}.\footnote{In practical systems, the core network with the high computing capacity and parallel processing is expected to reduce the time consumption in the orders of magnitude in response to real-time resource allocation.} The BS and destination are located at $\mathbf{W}_{s}=\left [ 5, 0, 0 \right ]^T$ and $\mathbf{W}_{d}=\left [ 15, 0, 0 \right ]^T$. To improve readability and ease intuitive understanding of the collected data, this analysis is limited to two UBDs. The initial and final points of the UAV$_1$ are denoted by $\mathbf{q}_{I}^{1}= [ 0, 10, 10 ]^T$ and $\mathbf{q}_{F}^{1}= [ 20, 10, 10 ]^T$, respectively. For UAV$_2$, $\mathbf{q}_{I}^{2}= [ 0, 10, 5  ]^T$ and $\mathbf{q}_{F}^{2}= [ 20, 10, 5 ]^T$ designate its initial and final points, respectively. The value of $c_r$ is  $0.35$. The initial transmit power of the BS and the UBD are $P_{s}=16$~dBm and $P_{u}=5$~mW, respectively. Furthermore, a power level of $P_{\mathrm{WPT}}\in [ 27, 40 ]$ dB \cite{8936931} is designated for the BS during WPT. The other system parameters are shown in Table~\ref{table:2}.
\begin{table}
\centering
\label{dl2}
\caption{SIMULATION PARAMETERS}
\begin{tabular}{|c|l|c|}
\hline
Parameters & Values/Sweep Range \\ \hline \hline
Maximum speed of UB, $V_{max}$ & 20 $\frac{m}{s}$ \\ \hline
Power noise, $\sigma^2$ & -90 dB \\ \hline
Path loss exponent, $\alpha$ & 2.3 \\ \hline
Total flight time of the UB, $T$ & 50 s \\ \hline
Backscatter power, $P_b$ & 1 $\mu$W \\ \hline
System bandwidth, $B$ & 1 MHz \\ \hline
channel power gain at reference distance, $\omega_o$ & -30 dB \\ \hline
Duration of one time slot, $\delta_t$ & 0.5 s \\ \hline
EH coefficient, $\mu$ & 0.84 \\ \hline
Backscatter coefficient, $\eta$ & 0.5 \\ \hline
Data demanded, $S$ & 50 Mbits \\ \hline
Maximum allowable total transmit power, $P_{\max}$ & 60 mW \\ \hline
Optimization threshold, $\epsilon$ & $10^{-4}$ \\ \hline
BS Coordinate, $W_s$ & $[5, 0, 0]$ \\ \hline
Destination (IoT Sink) Coordinate, $W_d$ & $[15, 0, 0]$ \\ \hline
UAV$_1$ Initial/Final Coordinates, $\textbf{q}_{I}^{1}/ \textbf{q}_{F}^{1}$ & $[0, 10, 10]/$$[20, 10, 10]$ \\ \hline
UAV$_2$ Initial/Final Coordinates, $\textbf{q}_{I}^{2}/ \textbf{q}_{F}^{2}$ & $[0, 10, 5]/$$[20, 10, 5]$ \\ \hline
WPT, $P_{\mathrm{WPT}}$ & $40$ dB ($27$ to $40$ dB) \\ \hline
Time horizon / Slot duration, $T$ / $\delta_t$ & $60$s/$0.5$s ($20$ to $60$s) \\ \hline
Redundant caching ratio, $c_r$ & $0.45$ ($0.1$ to $0.9$) \\ \hline
\end{tabular}
\label{table:2}
\end{table}
The efficacy of the proposed algorithms is evaluated against various benchmark schemes, which are detailed as follows: $i)$ \underline{Com}plete (Com) represents the average throughput of fully implemented algorithms; $ii)$ 3D+OP represents the average throughput of the algorithms, but with only one UBD; $iii)$ 2D+2UAV uses the same algorithms as Com but does not optimize the transmit power and fix the flight trajectory height of the UBDs at height $H$.

\begin{figure}[t]
    \centering
    \includegraphics[trim=2.0cm 7.2cm 2.8cm 6.8cm, clip=true, width=3in]{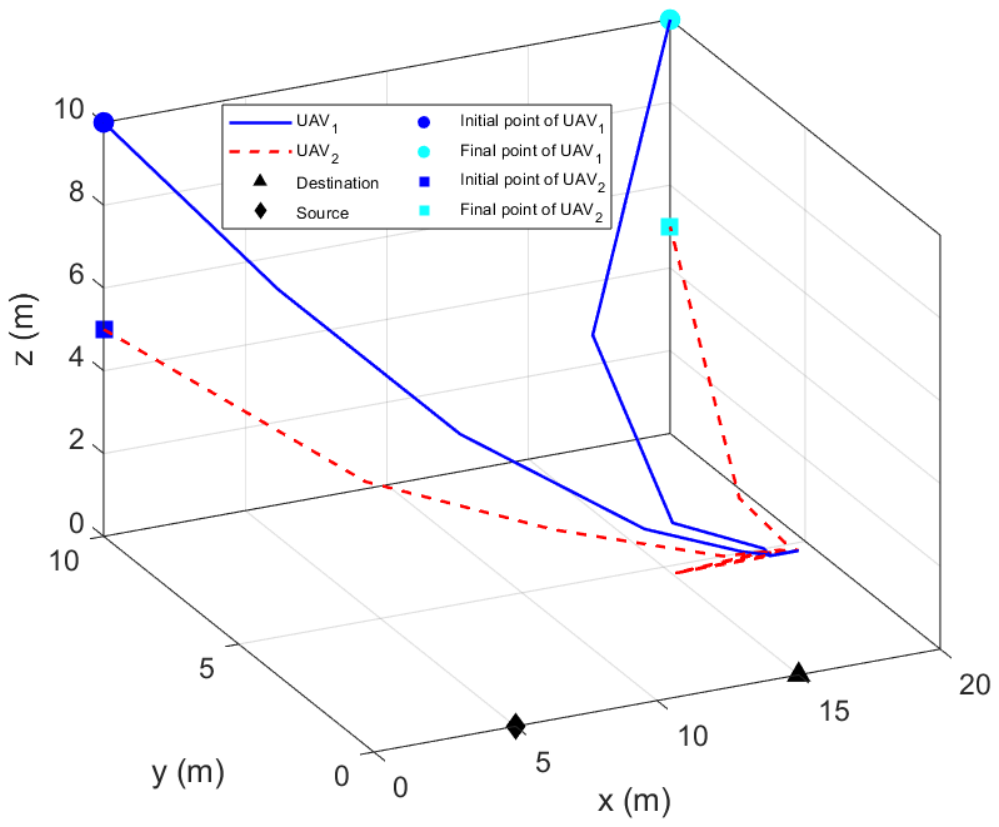}
    \vspace{-0.25cm}
    \caption{The optimized trajectory of the two UAVs.}
    \label{Trajv1}
\end{figure}
Fig.~\ref{Trajv1} illustrates the trajectories of UAV$_1$ and UAV$_2$ derived using  Algorithm~\ref{alg1}. The UBD trajectories are designed to maintain a minimum altitude of three meters, ensuring compliance with safe operational parameters for UAVs in diverse environments. Each UBD commences at an initial point, transitions to a position near the line connecting the BS and destination, and concludes at its designated final point. During this movement, the UBDs typically descend to the lowest allowable altitude and follow a trajectory around a strategically chosen location situated on the direct line between the source and the destination. This strategy minimizes the distances between the devices, thereby enhancing both energy collection and data transmission efficiency. Notably, the location around which the UBDs orbit is influenced by $P_{\mathrm{WPT}}$ \cite{9723521}, which dynamically adjusts the UBDs' paths to improve energy transfer and maximize the overall throughput.

\begin{table*}[t]
\centering
\caption{Total throughput and computational time among the DRL, the BCD, and the GA for various values of window time}
\begin{tabular}{|c|c|c|c|c|c|c|c|c|c|}
\hline
\multirow{2}{*}{$T$} 
& \multicolumn{3}{c|}{\textbf{DRL}} 
& \multicolumn{3}{c|}{\textbf{BCD}} 
& \multicolumn{3}{c|}{\textbf{GA}} \\ 
\cline{2-10}
& Throughput & Training time & Inference time 
& Throughput & Running time & Iterations
& Throughput & Running time & Generations \\ 
\hline
20 & 87.22 Mbits & 2640  s & 1.13 s & 265.51 Mbits & 569.51 s & 15 & 295.88 Mbits & 1168 s & 9300\\
30 & 131.27 Mbits & 3840 s & 1.17 s & 358.1 Mbits & 1049.22 s & 17 & 399.22 Mbits & 2033 s & 9600\\
40 & 175.33 Mbits & 5280 s & 1.24 s & 466.52 Mbits & 1453.8 s & 20 & 508.36 Mbits & 2940 s & 10200\\
50 & 219.39 Mbits & 6487 s & 1.28 s & 573.55 Mbits & 2436.26 s & 22 & 618.4 Mbits & 3540 s & 10600\\
\hline
\end{tabular}
\label{table:3Algo}
\end{table*}

\begin{figure*}[t]
	\centering
    \begin{minipage}[t]{0.3\textwidth}
	\includegraphics[trim=3.5cm 8.4cm 3.0cm 8.5cm, clip=true, scale=0.33]{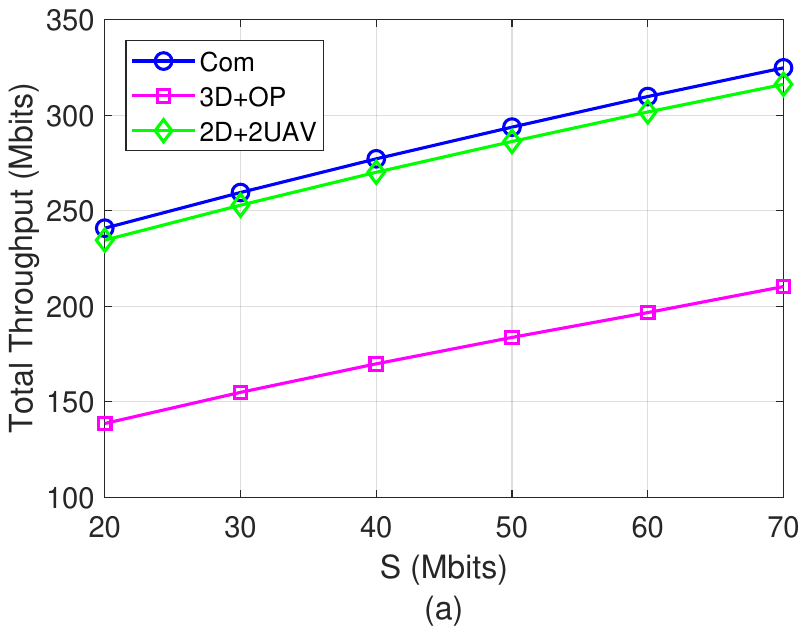}
    \end{minipage}
    \begin{minipage}[t]{0.3\textwidth}
	\includegraphics[trim=3.5cm 8.4cm 3.0cm 8.5cm, clip=true, scale=0.33]{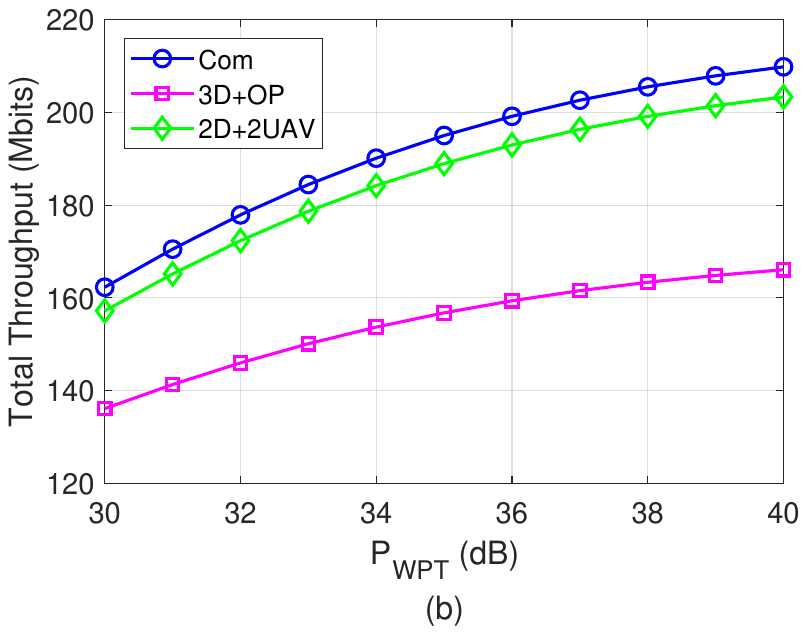}
    \end{minipage}
    \begin{minipage}[t]{0.3\textwidth}
	\includegraphics[trim=3.5cm 8.4cm 3.0cm 8.5cm, clip=true, scale=0.33]{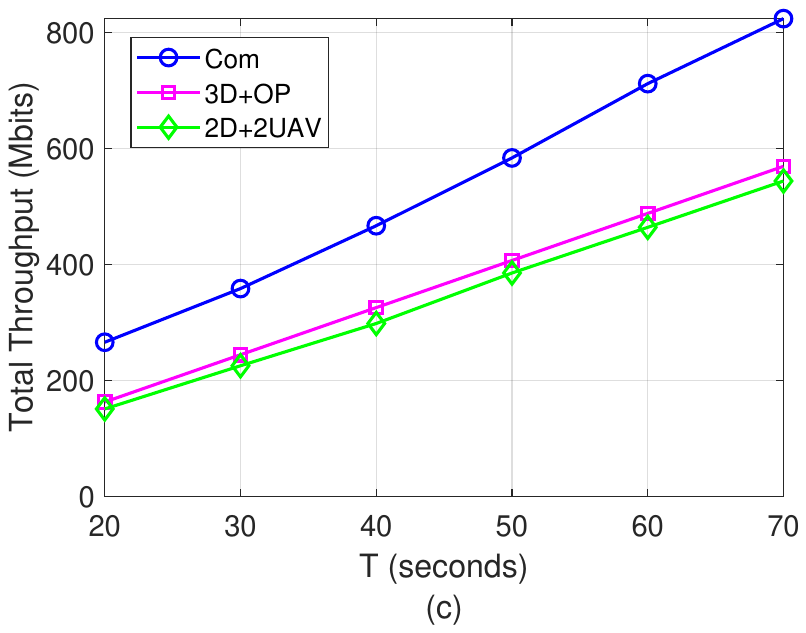}
    \end{minipage}

\caption{Throughput comparison of scenarios with different parameter settings: (a) The total throughput versus the demanded data. (b) The total throughput versus $P_{\mathrm{WPT}}$. (c) The total throughput versus traveling time $T$.}
\vspace{-0.5cm}
\label{scenarios}
\end{figure*}

\begin{figure*}[t]
	\centering
    \begin{minipage}[t]{0.3\textwidth}
	\includegraphics[trim=3.5cm 8.4cm 3.0cm 8.5cm, clip=true, scale=0.33]{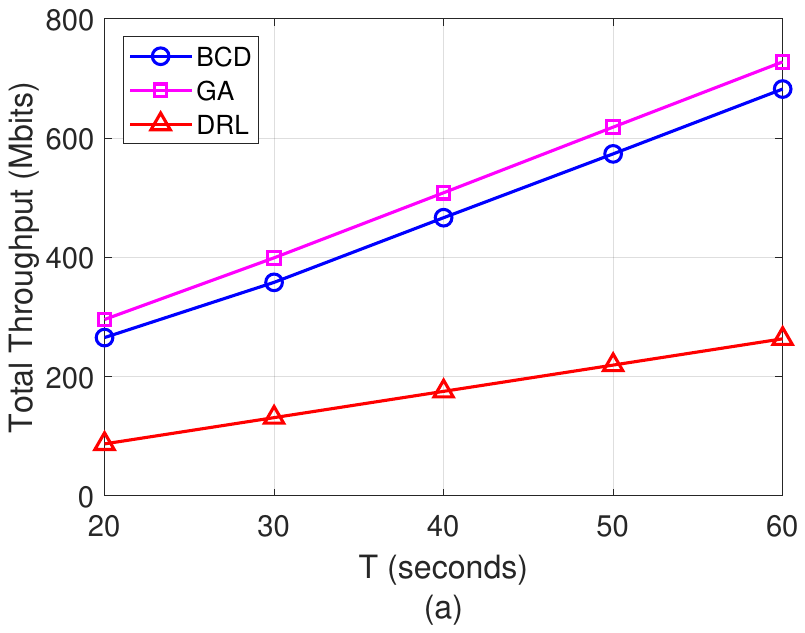}
    \end{minipage}
    \begin{minipage}[t]{0.3\textwidth}
	\includegraphics[trim=3.5cm 8.4cm 3.0cm 8.5cm, clip=true, scale=0.33]{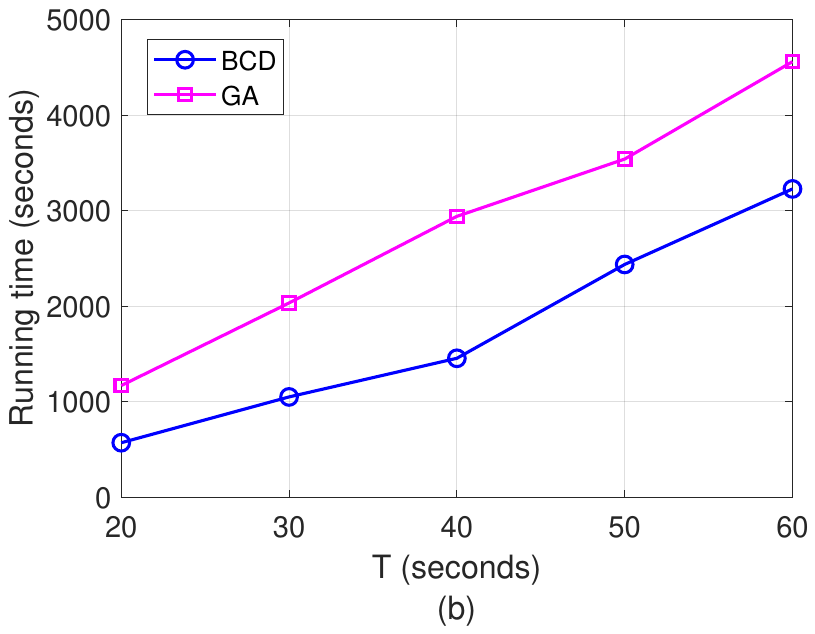}
    \end{minipage}
    \begin{minipage}[t]{0.3\textwidth}
	\includegraphics[trim=3.5cm 8.4cm 3.0cm 8.5cm, clip=true, scale=0.33]{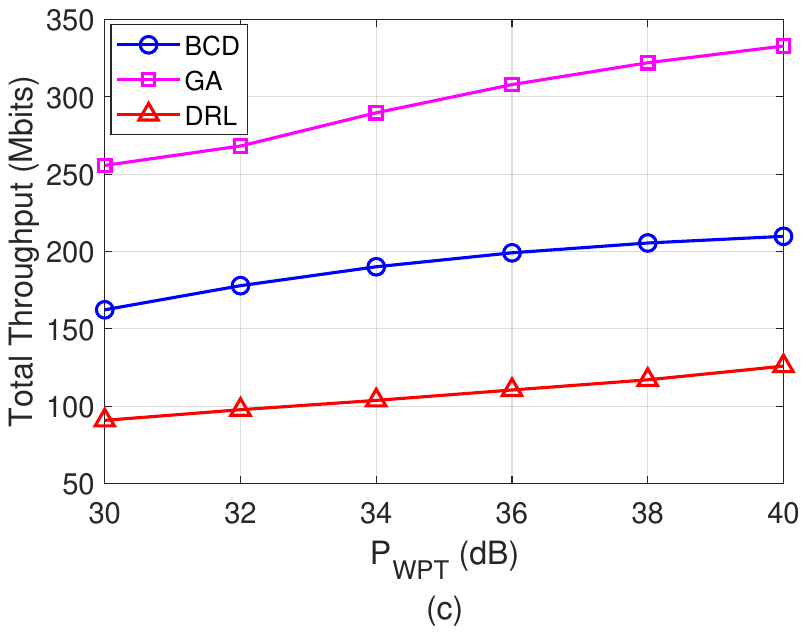}
    \end{minipage}
\caption{Performance comparison between algorithms: (a) The total throughput versus the traveling time $T$. (b) The running time versus the traveling time $T$. (c) The total throughput versus the power transfer $P_{\mathrm{WPT}}$.}
\label{algorithms}
\end{figure*}

Fig.~\ref{scenarios}(a) examines the relationship between the required data size and the overall throughput. The simulations  conducted with $T=30$ seconds, $P_{\mathrm{WPT}}=40$~dB, $V_{\max}=20$~m/s, $P_s = 15$ mW, $P_{\max} = 20$ mW, and an initial reflection time $\tau=0.4$. For Algorithm~\ref{alg1}, the reflection time ratio $c_r$ is 0.45, whereas for the 3D+OP approach, it is 0.9. The 2D+2UAV scenario employs a constant altitude of $H=5$ meters for all the trajectories. Algorithm~\ref{alg1} obtains a significantly higher throughput than the others. Specifically, with a data demand of $S=70$ Mbits, the throughput reaches 324.75 Mbits, compared to 210.24 Mbits and 316.12 Mbits for the 3D+OP and 2D+2UAV methods, respectively. These gains are attributed to the efficient utilization of all the available resources. Fig.~\ref{scenarios}(b) the throughput vs. $P_{\mathrm{WPT}}$ with $S=20$ Mbits and the other settings keep the same as before. The proposed Com consistently obtains the highest throughput across all tested power levels. For instance, when $P_{\mathrm{WPT}}=30$ dB, the throughput values for the Com, 2D+2UAV, and 3D+OP strategies are 162.29 Mbits, 157.2 Mbits, and 136.12 Mbits, respectively. These findings emphasize that the proposed method can effectively optimize energy utilization. Fig.~\ref{scenarios}(c) delves into the impact of total travel time on performance metrics, with $P_{\mathrm{WPT}}=40$~dB, $S=70$ Mbits, $\delta_t=0.5$ s and $P_{\max}=60$ mW. We set $c_r$ for each related benchmark as before. The altitude of the UBDs is fixed at $H=10$ m in the 2D+2UAV method. All the benchmarks exhibit an increasing trend in total throughput as the travel time $T$ increases. This is consistent with \eqref{29a}, which predicts the total throughput at the destination scaling proportionally with the reflection time. The proposed algorithm significantly outperforms the benchmarks in overall throughput, demonstrating the effectiveness of our approach over benchmarks that do not utilize multiple UBDs or optimize total transmit power.

Under the constraints $V_{\max}=20$ m/s, $P_s = 16$ dBm, $S=70$ Mbits, and $P_{\max}=60$ mW, the GA initializes a population of 100 individuals, a mutation rate of 0.1, and up to 10,000 generations. Along with the parameters above and under the constraints of Deep Reinforcement Learning (DRL), the Deep Deterministic Policy Gradient (DDPG) algorithm optimizes the UBD trajectory and communication performance. The DDPG agent is trained over 2000 episodes with a state dimension of 14 and an action dimension of 12. The Actor and Critic networks are updated using learning rates of $10^{-4}$ and $10^{-3}$, respectively, with a batch size of 64.  This configuration ensures efficient learning in continuous action spaces while adhering to the physical constraints of the UBD communication environment \cite{chen2020decentralized, adhikari2024energy}. Fig.~\ref{algorithms}(a) presents a comparative analysis of total throughput based on total movement time. The GA demonstrates a consistent advantage, achieving marginally higher throughput than the BCD. For instance, with movement times of 40 and 60 seconds, the GA yields throughputs of 508.36~Mbits and 727.85~Mbits, respectively, representing increases in DRL of $7.9\%$ and $65\%$, respectively, under the same conditions. Meanwhile, the DRL approach exhibits significantly lower throughput performance due to its sensitivity to highly dynamic environments. These outcomes underscore the potential of the GA to effectively optimize throughput through evolutionary computation, albeit at a higher computational cost. Fig.~\ref{algorithms}(b) shows the total running time increasing significantly with the increase in the flight time $T$. Specifically,  the time consumption of the GA is considerably higher than that of the BCD. As $T$ is 50 seconds, the GA takes $31.19\%$ longer than the BCD. This is because the computational complexity of the BCD mainly depends on trajectory and transmit-power optimization, as closed-form expressions for the optimal DTS values can be obtained. In contrast, the GA calculates the fitness for a large number of individuals in each iteration. Fig.~\ref{algorithms}(c) illustrates the variation in the total throughput with respect to the wireless power transfer parameter $P_{\mathrm{WPT}}$. As $P_{\mathrm{WPT}}$ increases, all benchmarks exhibit gradual throughput improvements due to increased energy availability. Among them, the GA consistently achieves superior performance, maintaining an average gain of approximately $25\%$ over the BCD and exceeding the DRL by more than $60\%$. For instance, when $P_{\mathrm{WPT}}=36$~dB, the GA attains about 308~Mbits, compared to 199~Mbits for BCD and 110~Mbits for DRL, representing improvements of roughly $54.7\%$ and $180\%$, respectively. These findings reaffirm the robustness of the GA in optimizing the throughput across varying transmission power levels. 

\begin{figure}[t]
    \centering
    \includegraphics[trim=3.5cm 9.2cm 2.8cm 9.0cm, clip=true, width=2.5in]{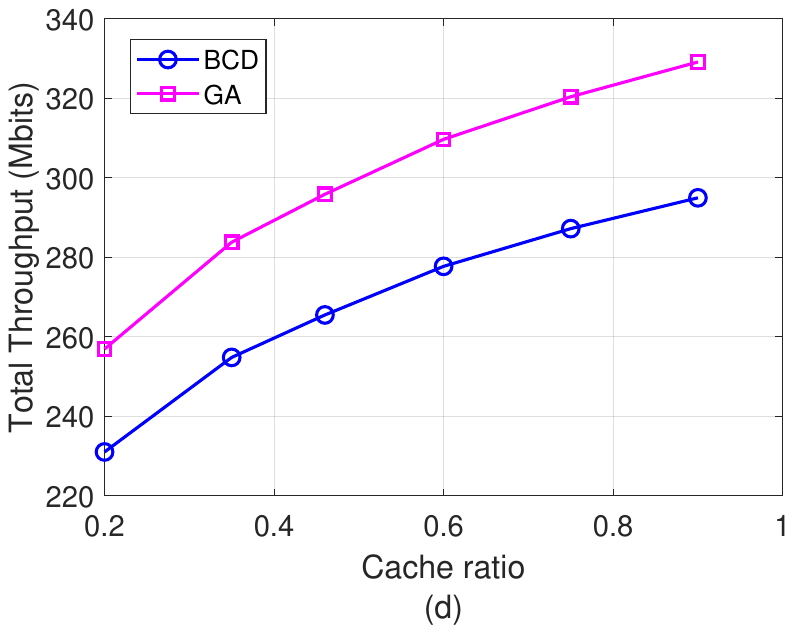}
    \caption{The total throughput versus the cache ratio $c_r$.}
    \label{fig:throughput_vs_cache}
    \vspace{-0.55cm}
\end{figure}
Fig.~\ref{fig:throughput_vs_cache} delves into the impact of the cache ratio $c_r$ on the system performance, with $P_{\mathrm{WPT}}=40$ dB, $S=70$ Mbits, $\delta_t=0.5$ s, $P_{\max}=60$ mW, and $T=20$ seconds. In the low cache-ratio regime, e.g., transitioning from $0.2$ to $0.3$, the network demonstrates high responsiveness by mitigating the backhaul bottleneck, yielding significant throughput surges of approximately $17$ Mbits for the BCD  and $19$ Mbits for the GA. However, as $c_r$ approaches the extreme upper threshold, e.g., from $0.8$ to $0.9$, the marginal gain rapidly diminishes to roughly $5.1$ Mbits and $5.8$ Mbits for the BCD and the GA, respectively. The throughput and the computational cost comprising the training time and inference time are shown in Table~\ref{table:3Algo}. From a performance perspective, the GA establishes the upper bound for total network throughput, scaling from $295.88$ Mbits at $T=20$ seconds to $618.40$ Mbits at $T=50$ seconds, thereby outperforming the BCD and exceeding the DRL. However, this superiority of the GA comes at an expense of the computational cost. In contrast, the BCD  demonstrates a highly efficient balance between optimality and complexity and presents a lighter computational burden than the GA while maintaining highly competitive throughput. Furthermore, the DRL  exhibits a distinct two-stage execution paradigm. Although its offline training phase is computationally intensive, its online inference time is exceptionally brief.

All simulations have been executed on a personal computer with a central processing unit (CPU) only. To evaluate the algorithm’s capability for near-real-time performance under practical deployment conditions, the algorithm is also implemented with GPU support using CUDA and tested on a workstation equipped with an NVIDIA RTX 4090 GPU. Comparative results are presented in Table~\ref{table:running_time}. The dramatic runtime reductions reported were achieved by rewriting the mathematical objective functions into vectorized operations. For the BCD, the heavy matrix inversions required during the interior-point constraint evaluations were mapped to highly parallelized CUDA tensors. For the GA, the primary computational bottleneck, which is the fitness evaluation of thousands of complex 3D candidate trajectories, is parallelized, allowing the GPU to compute the fitness of the entire population simultaneously via batch matrix multiplications. The BCD is precise and efficient, leveraging solvers like CVX for convex problems, but it is limited by its inability to adapt to non-convex or dynamic environments. The GA  is highly flexible and capable of global search, making it suitable for non-convex problems, though it is computationally expensive. The DRL has a long training time and high sample complexity, which remain significant challenges.
\begin{table}[t]
\centering
\caption{Comparing the running time with/without GPU}
\begin{tabular}{|c|l|c|l|}
\hline
T(s) & 20 & 30 & 40 \\ \hline \hline
BCD without GPU & 569.52 s & 1049.22 s & 1454.8 s \\ \hline
BCD with GPU & 7.47 s & 13.35 s & 18.26 s \\ \hline
GA without GPU & 1168 s & 2033 s & 2940 s \\ \hline
GA with GPU & 16 s & 25 s & 38 s \\ \hline
\end{tabular}
\label{table:running_time}
\end{table}

\begin{table}[t]
\centering
\caption{Comparing the running time with GPU vs. the number of UAVs} 
\begin{tabular}{|c|l|l|l|l|l|}
\hline
Number of UAVs & 2 & 4 & 6 & 7 & 10 \\ \hline \hline
BCD with GPU & 7.47 s & 25.84 s & 58.12 s & 79.45 s & 168.32 s \\ \hline
GA with GPU & 16.00 s & 22.40 s & 29.50 s & 33.10 s & 45.80 s \\ \hline
\end{tabular}
\label{table:running_time_vs_Kv1}
\end{table}
Table~\ref{table:running_time_vs_Kv1} illustrates the offline running times of the BCD and GA algorithms with GPU acceleration \cite{8294239} as the number of UAVs varies from two to ten. A critical crossover point in algorithmic efficiency is clearly observed as the network expands. For a minimal deployment of two UAVs, the deterministic BCD algorithm exhibits superior speed. However, as the number of UAVs increases, the computational burden of the BCD  escalates sharply. In contrast, the  GA demonstrates remarkable scalability under GPU acceleration. By leveraging parallelized tensor operations to evaluate massive populations simultaneously, the execution time of the GA grows at a substantially slower pace.

\section{Conclusion}\label{Sec:Concl}
This paper presented a novel network architecture designed to maintain reliable communication in scenarios where a direct source-destination link is unavailable. The proposed framework capitalizes on the combined application of caching and backscatter technologies integrated with multiple UBDs. A joint optimization strategy is adopted to simultaneously design the DTS ratio, UBD trajectories, and transmission power to maximize the total network throughput. Numerical results emphasize the design's effectiveness, especially in scenarios lacking multiple UBDs or optimized total transmission power configurations. The BCD and GA offer unique strengths in solving high-dimensional non-convex optimization problems. 

While this study established rigorous theoretical upper bounds and robust simulation frameworks, empirical validation in real-world physical scenarios remains a critical next step. Deploying a multi-UAV swarm with active SWIPT arrays and microsecond-level DTS circuitry introduces severe hardware synchronization challenges and hardware impairments. Future research will focus on bridging this gap by implementing the proposed algorithms on edge-enabled testbeds to quantify the deployment penalties. In practical deployments, imperfect channel state information (CSI) severely degrades the reliability of optimization solvers. To address this, a highly promising approach involves an adaptive DTS that leverages Deep Q-Learning, which could be trained to causally observe the instantaneous received signal strength indicators and predict localized energy deficits to maintain robust connectivity.

\bibliographystyle{IEEEtran}
\bibliography{refs}

@article{van2026single,
  title={Single-and Multi-Objective Stochastic Optimization for Next-Generation Networks in the Generative {AI} and Quantum Computing Era},
  author={Van Chien, Trinh and Duc, Bui Trong and Tung, Nguyen Xuan and Nguyen, Van Duc and Khalid, Waqas and Chatzinotas, Symeon and Hanzo, Lajos},
  journal={arXiv preprint arXiv:2601.02175},
  year={2026}
}

@article{ravan2025enhancing,
  title={Enhancing spectral efficiency: the impact of {RIS} elements association on multi-user cell-free wireless networks},
  author={Ravan, Mohammad and Tabataba, Foroogh S and Fazel, Mohammad Sadegh and Yanikomeroglu, Halim},
  journal={IEEE Open Journal of the Communications Society},
  year={2025},
  volume ={6},
  pages ={1895 - 1913},
  publisher={IEEE}
}

@article{huangfu2025performance,
  title={Performance analysis of fluid antenna system under spatially-correlated {R}ician fading channels},
  author={Huangfu, Jiangsheng and Song, Zhengyu and Hou, Tianwei and Li, Anna and Liu, Yuanwei and Nallanathan, Arumugam and Wong, Kai-Kit},
  journal={IEEE Trans. Wireless Commun.},
  year={2025},
  volume = {25},
  pages = {1394 - 1407},
  publisher={IEEE}
}

@ARTICLE{9222571,
  author={Yang, Gang and Dai, Rao and Liang, Ying-Chang},
  journal={IEEE Trans. Wireless Commun.}, 
  title={Energy-Efficient UAV Backscatter Communication With Joint Trajectory Design and Resource Optimization}, 
  year={2021},
  volume={20},
  number={2},
  pages={926-941},
  doi={10.1109/TWC.2020.3029225}}

@ARTICLE{8936931,
  author={Yin, Sixing and Li, Lihua and Yu, F. Richard},
  journal={IEEE Trans. Veh. Technol.}, 
  title={Resource Allocation and Basestation Placement in Downlink Cellular Networks Assisted by Multiple Wireless Powered {UAVs}}, 
  year={2020},
  volume={69},
  number={2},
  pages={2171-2184},
  doi={10.1109/TVT.2019.2960765}}

@ARTICLE{8892608,
  author={Jayakody, Dushantha Nalin K. and Perera, Tharindu Dilshan Ponnimbaduge and Ghrayeb, Ali and Hasna, Mazen O.},
  journal={IEEE Trans. Veh. Technol.}, 
  title={Self-Energized {UAV}-Assisted Scheme for Cooperative Wireless Relay Networks}, 
  year={2020},
  volume={69},
  number={1},
  pages={578-592},
  doi={10.1109/TVT.2019.2950041}}

@ARTICLE{9163290,
  author={Yan, Hua and Chen, Yunfei and Yang, Shuang-Hua},
  journal={IEEE Trans. Veh. Technol.}, 
  title={{UAV}-Enabled Wireless Power Transfer With Base Station Charging and {UAV} Power Consumption}, 
  year={2020},
  volume={69},
  number={11},
  pages={12883-12896},
  doi={10.1109/TVT.2020.3015246}}

@ARTICLE{9723521,
  author={Tran, Dinh-Hieu and Chatzinotas, Symeon and Ottersten, Bjorn},
  journal={IEEE Trans. Veh. Technol.}, 
  title={Throughput Maximization for Backscatter- and Cache-Assisted Wireless Powered {UAV} Technology}, 
  year={2022},
  volume={71},
  number={5},
  pages={5187-5202},
  doi={10.1109/TVT.2022.3155190}}

@ARTICLE{9416816,
  author={Yuan, Yaxiong and Lei, Lei and Vu, Thang X. and Chatzinotas, Symeon and Sun, Sumei and Ottersten, Björn},
  journal={IEEE Trans. Veh. Technol.}, 
  title={Energy Minimization in {UAV}-Aided Networks: Actor-Critic Learning for Constrained Scheduling Optimization}, 
  year={2021},
  volume={70},
  number={5},
  pages={5028-5042},
  doi={10.1109/TVT.2021.3075860}}

@ARTICLE{8093703,
  author={Lyu, Bin and You, Changsheng and Yang, Zhen and Gui, Guan},
  journal={IEEE Trans. Veh. Technol.}, 
  title={The Optimal Control Policy for {RF}-Powered Backscatter Communication Networks}, 
  year={2018},
  volume={67},
  number={3},
  pages={2804-2808},
  doi={10.1109/TVT.2017.2768667}}

@ARTICLE{8663615,
  author={Zeng, Yong and Xu, Jie and Zhang, Rui},
  journal={IEEE Trans. Wireless Commun.}, 
  title={Energy Minimization for Wireless Communication With Rotary-Wing {UAV}}, 
  year={2019},
  volume={18},
  number={4},
  pages={2329-2345},
  doi={10.1109/TWC.2019.2902559}}

@ARTICLE{8901136,
  author={Hua, Meng and Yang, Luxi and Li, Chunguo and Wu, Qingqing and Swindlehurst, A. Lee},
  journal={IEEE Trans. Commun.}, 
  title={Throughput Maximization for {UAV}-Aided Backscatter Communication Networks}, 
  year={2020},
  volume={68},
  number={2},
  pages={1254-1270},
  doi={10.1109/TCOMM.2019.2953641}}

@ARTICLE{8700623,
  author={Xiao, Sa and Guo, Huayan and Liang, Ying-Chang},
  journal={IEEE Trans. Wireless Commun.}, 
  title={Resource Allocation for Full-Duplex-Enabled Cognitive Backscatter Networks}, 
  year={2019},
  volume={18},
  number={6},
  pages={3222-3235},
  doi={10.1109/TWC.2019.2912203}}

@ARTICLE{8424210,
  author={Kang, Xin and Liang, Ying-Chang and Yang, Jing},
  journal={IEEE Trans. Wireless Commun.}, 
  title={Riding on the Primary: A New Spectrum Sharing Paradigm for Wireless-Powered {IoT} Devices}, 
  year={2018},
  volume={17},
  number={9},
  pages={6335-6347},
  doi={10.1109/TWC.2018.2859389}}

@ARTICLE{8302460,
  author={Gong, Shimin and Huang, Xiaoxia and Xu, Jing and Liu, Wei and Wang, Ping and Niyato, Dusit},
  journal={IEEE Trans. Commun.}, 
  title={Backscatter Relay Communications Powered by Wireless Energy Beamforming}, 
  year={2018},
  volume={66},
  number={7},
  pages={3187-3200},
  doi={10.1109/TCOMM.2018.2809613}}

@ARTICLE{7366709,
  author={Hong, Mingyi and Razaviyayn, Meisam and Luo, Zhi-Quan and Pang, Jong-Shi},
  journal={IEEE Signal Process. Mag.}, 
  title={A Unified Algorithmic Framework for Block-Structured Optimization Involving Big Data: With applications in machine learning and signal processing}, 
  year={2016},
  volume={33},
  number={1},
  pages={57-77},
  doi={10.1109/MSP.2015.2481563}}

@book{boyd2004convex,
  title={Convex optimization},
  author={Boyd, Stephen P and Vandenberghe, Lieven},
  year={2004},
  publisher={Cambridge university press}
}

@ARTICLE{8618602,
  author={Zhang, Guangchi and Wu, Qingqing and Cui, Miao and Zhang, Rui},
  journal={IEEE Trans. Wireless Commun.}, 
  title={Securing {UAV} Communications via Joint Trajectory and Power Control}, 
  year={2019},
  volume={18},
  number={2},
  pages={1376-1389},
  doi={10.1109/TWC.2019.2892461}}

@inproceedings{boyd2002advances,
  title={Advances in convex optimization: Interior-point methods, cone programming, and applications},
  author={Boyd, Stephen},
  booktitle={Proc. CDC},
  year={2002}
}

@book{gradshteyn2014table,
  title={Table of integrals, series, and products},
  author={Gradshteyn, Izrail Solomonovich and Ryzhik, Iosif Moiseevich},
  year={2014},
  publisher={Academic press}
}

@ARTICLE{10287979,
  author={Teixeira, Karolayne and Miguel, Geovane and Silva, Hugerles S. and Madeiro, Francisco},
  journal={IEEE Access}, 
  title={A Survey on Applications of Unmanned Aerial Vehicles Using Machine Learning}, 
  year={2023},
  volume={11},
  number={},
  pages={117582-117621},
  doi={10.1109/ACCESS.2023.3326101}}

@ARTICLE{9219201,
  author={Zhang, Jing and Cui, Jie and Zhong, Hong and Bolodurina, Irina and Liu, Lu},
  journal={IEEE Trans. Netw. Sci. Eng.}, 
  title={Intelligent Drone-assisted Anonymous Authentication and Key Agreement for {5G/B5G} Vehicular Ad-Hoc Networks}, 
  year={2021},
  volume={8},
  number={4},
  pages={2982-2994},
  doi={10.1109/TNSE.2020.3029784}}

@INPROCEEDINGS{9884945,
  author={Gao, Ang and Shao, Zhenyuan and Hu, Yansu and Liang, Wei},
  booktitle={Proc. IGARSS}, 
  title={Joint Trajectory and Energy Efficiency Optimization for Multi-{UAV} Assisted Offloading}, 
  year={2022}}

@ARTICLE{9712640,
  author={Na, Zhenyu and Ji, Chenglan and Lin, Bin and Zhang, Ning},
  journal={IEEE Internet Things J.}, 
  title={Joint Optimization of Trajectory and Resource Allocation in Secure {UAV} Relaying Communications for Internet of Things}, 
  year={2022},
  volume={9},
  number={17},
  pages={16284-16296},
  doi={10.1109/JIOT.2022.3151105}}

@ARTICLE{9619467,
  author={Mondal, Abhishek and Mishra, Deepak and Prasad, Ganesh and Hossain, Ashraf},
  journal={IEEE Internet Things J.}, 
  title={Joint Optimization Framework for Minimization of Device Energy Consumption in Transmission Rate Constrained {UAV}-Assisted {IoT} Network}, 
  year={2022},
  volume={9},
  number={12},
  pages={9591-9607},
  doi={10.1109/JIOT.2021.3128883}}

@INPROCEEDINGS{10436857,
  author={Luo, Haoxiang and Zhang, Qianqian and Yu, Hongfang and Sun, Gang and Xu, Shizhong},
  booktitle={Proc. GLOBECOM}, 
  title={Symbiotic {PBFT} Consensus: Cognitive Backscatter Communications-enabled Wireless {PBFT} Consensus}, 
  year={2023}
  }

@INPROCEEDINGS{10297369,
  author={Zhang, Xinren and Liu, Weijie and Huang, Nuo and Xu, Zhengyuan},
  booktitle={Proc. ICCSN}, 
  title={Backscattering Interference Channel Characteristics in Full-Duplex Underwater Optical Wireless Communication}, 
  year={2023}
  }

@INPROCEEDINGS{10451048,
  author={Sood, Saksham},
  booktitle={Proc. PEEIC}, 
  title={An Overview of Backscatter Communication Technique for Performing Wireless Sensing in Green Communication Networks}, 
  year={2023}}

@ARTICLE{9450021,
  author={Lu, Weidang and Ding, Yu and Gao, Yuan and Hu, Su and Wu, Yuan and Zhao, Nan and Gong, Yi},
  journal={IEEE Trans. Industr. Inform.}, 
  title={Resource and Trajectory Optimization for Secure Communications in Dual Unmanned Aerial Vehicle Mobile Edge Computing Systems}, 
  year={2022},
  volume={18},
  number={4},
  pages={2704-2713},
  doi={10.1109/TII.2021.3087726}}

@INPROCEEDINGS{9674583,
  author={Guan, Zixuan and Wang, Siye and Gao, Luoyu and Xu, Wenbo},
  booktitle={Proc. ICCC}, 
  title={Energy-Efficient {UAV} Communication with {3D} Trajectory Optimization}, 
  year={2021}}

@INPROCEEDINGS{10525626,
  author={Tian, Hangyu and Yan, Maode and Dai, Liang and Yang, Panpan},
  booktitle={Proc. ICMEE}, 
  title={Joint Communication and Computation Resource Scheduling of a Solar-Powered {UAV}-Assisted Communication System for Platooning Vehicles}, 
  year={2023}}

@ARTICLE{9468714,
  author={Xie, Lifeng and Cao, Xiaowen and Xu, Jie and Zhang, Rui},
  journal={IEEE Trans. Green Commun. Netw.}, 
  title={UAV-Enabled Wireless Power Transfer: A Tutorial Overview}, 
  year={2021},
  volume={5},
  number={4},
  pages={2042-2064},
  doi={10.1109/TGCN.2021.3093718}}

@ARTICLE{9471791,
  author={Liu, Yuan and Xiong, Ke and Lu, Yang and Ni, Qiang and Fan, Pingyi and Letaief, Khaled Ben},
  journal={IEEE J. Sel. Areas Commun.}, 
  title={{UAV}-Aided Wireless Power Transfer and Data Collection in Rician Fading}, 
  year={2021},
  volume={39},
  number={10},
  pages={3097-3113},
  doi={10.1109/JSAC.2021.3088693}}

@ARTICLE{9708417,
  author={Ren, Hong and Zhang, Zhenkun and Peng, Zhangjie and Li, Li and Pan, Cunhua},
  journal={IEEE Internet Things J.}, 
  title={Energy Minimization in RIS-Assisted UAV-Enabled Wireless Power Transfer Systems}, 
  year={2023},
  volume={10},
  number={7},
  pages={5794-5809},
  doi={10.1109/JIOT.2022.3150178}}

@INPROCEEDINGS{9621115,
  author={Masood, Arooj and Nguyen, The-Vi and Truong, Thanh Phung and Cho, Sungrae},
  booktitle={Proc. ICTC}, 
  title={Content Caching in {HAP}-Assisted Multi-{UAV} Networks Using Hierarchical Federated Learning}, 
  year={2021},
  volume={},
  number={},
  pages={1160-1162},
  doi={10.1109/ICTC52510.2021.9621115}}

@ARTICLE{9373692,
  author={Zhang, Rongqing and Lu, Rui and Cheng, Xiang and Wang, Ning and Yang, Liuqing},
  journal={IEEE Trans. Commun.}, 
  title={A {UAV}-Enabled Data Dissemination Protocol With Proactive Caching and File Sharing in {V2X} Networks}, 
  year={2021},
  volume={69},
  number={6},
  pages={3930-3942},
  doi={10.1109/TCOMM.2021.3064569}}

@ARTICLE{9542958,
  author={Fazel, Fahimeh and Abouei, Jamshid and Jaseemuddin, Muhammad and Anpalagan, Alagan and Plataniotis, Konstantinos N.},
  journal={IEEE Internet Things J.}, 
  title={Secure Throughput Optimization for Cache-Enabled Multi-{UAVs} Networks}, 
  year={2022},
  volume={9},
  number={10},
  pages={7783-7801},
  doi={10.1109/JIOT.2021.3114086}}

@ARTICLE{10274811,
  author={Liu, Yinan and Yang, Chao and Chen, Xin and Wu, Fengyan},
  journal={IEEE Trans. Intell. Veh.}, 
  title={Joint Hybrid Caching and Replacement Scheme for {UAV}-Assisted Vehicular Edge Computing Networks}, 
  year={2024},
  volume={9},
  number={1},
  pages={866-878},
  doi={10.1109/TIV.2023.3323217}}

@ARTICLE{IRS_IoT2024,
  author    = {A. Kumar and B. Singh},
  journal   = {IEEE Internet Things J.},
  title     = {{IRS}-Assisted Laser-Powered {UAV} Energy Harvesting for {IoT} Networks},
  year      = {2024},
  volume    = {11},
  number    = {5},
  pages     = {7890--7901},
  doi       = {10.1109/JIOT.2024.0123456},
}

@ARTICLE{GreenUAVIoT2023,
  author    = {M. A. Jamshed and F. Ayaz},
  journal   = {IEEE Internet Things J.},
  title     = {Green {UAV}‑Enabled Internet‑of‑Things Network with {AI}‑Assisted {NOMA} for Disaster Management},
  year      = {2023},
  volume    = {10},
  number    = {17},
  pages     = {6500--6514},
  doi       = {10.1109/JIOT.2023.1122334},
}

@article{Turn0search16,
  author={Sheng, Mengyuan and Liu, Jun and Zhang, Ran and Li, Guohao and Xu, Wenhui},
  journal={IEEE Internet Things J.},
  title={{UAV}-Assisted Mobile Edge Computing With Backscatter-Aided {IoT} Devices: Task Scheduling and Resource Allocation},
  year={2023},
  volume={10},
  number={6},
  pages={4942--4955},
  doi={10.1109/JIOT.2022.3219876}
}

@article{Turn0search13,
  author={Zhao, Zhengyu and Zhang, Xianjun and Li, Hongyu and Deng, Kai and Wu, Ke},
  journal={IEEE Internet Things J.},
  title={Joint Cooperative Caching and Power Control for {UAV}-Enabled Vehicular {IoT} Networks},
  year={2024},
  volume={11},
  number={3},
  pages={2190--2204},
  doi={10.1109/JIOT.2023.3275181}
}

@article{chen2020decentralized,
  title={Decentralized computation offloading for multi-user mobile edge computing: A deep reinforcement learning approach},
  author={Chen, Zhao and Wang, Xiaodong},
  journal={EURASIP J. Wirel. Commun. Netw.},
  volume={2020},
  number={1},
  pages={188},
  year={2020},
  publisher={Springer}
}

@article{adhikari2024energy,
  title={Energy efficient {RIS}-assisted {UAV} networks using twin delayed {DDPG} technique},
  author={Adhikari, Bhagawat and Khwaja, Ahmed Shaharyar and Jaseemuddin, Muhammad and Anpalagan, Alagan and Nallanathan, Arumugam},
  journal={IEEE Trans. Wireless Commun.},
  volume={23},
  number={12},
  pages={18423--18439},
  year={2024},
  publisher={IEEE}
}

@misc{grant2009cvx,
  title={{CVX} users’ guide},
  author={Grant, Michael and Boyd, Stephen and Ye, Yinyu},
  year={2009}
}

@ARTICLE{9750143,
  author={Du, Yu and Chen, Zijing and Hao, Jianjun and Guo, Yijun},
  journal={IEEE Access}, 
  title={Joint Optimization of Trajectory and Communication in Multi-{UAV} Assisted Backscatter Communication Networks}, 
  year={2022},
  volume={10},
  number={},
  pages={40861-40871},
  doi={10.1109/ACCESS.2022.3165159}}

@ARTICLE{10065386,
  author={Jiang, Xu and Sheng, Min and Zhao, Nan and Liu, Junyu and Niyato, Dusit and Yu, F. Richard},
  journal={IEEE Trans. Wireless Commun.}, 
  title={Outage Analysis of {UAV}-Aided Networks With Underlaid Ambient Backscatter Communications}, 
  year={2023},
  volume={22},
  number={11},
  pages={7492-7505},
  doi={10.1109/TWC.2023.3251979}}

@ARTICLE{10947038,
  author={Li, Bowei and Tripathi, Saugat and Hosain, Salman and Zhang, Ran and Wang, Miao and Xie, Jiang},
  journal={IEEE Trans. Cogn. Commun. Netw.}, 
  title={When Learning Meets Dynamics: Distributed User Connectivity Maximization in {UAV}-Based Communication Networks}, 
  year={2025},
  volume={},
  number={},
  pages={1-1},
  doi={10.1109/TCCN.2025.3556758}}

@ARTICLE{10576636,
  author={Zhang, Lei and Wen, Fangqing and Zhang, Qinghe and Gui, Guan and Sari, Hikmet and Adachi, Fumiyuki},
  journal={IEEE Internet Things J.}, 
  title={Constrained Multiobjective Decomposition Evolutionary Algorithm for {UAV}-Assisted Mobile Edge Computing Networks}, 
  year={2024},
  volume={11},
  number={22},
  pages={36673-36687},
  doi={10.1109/JIOT.2024.3417009}}

@ARTICLE{8294239,
  author={Roberge, Vincent and Tarbouchi, Mohammed and Labonté, Gilles},
  journal={IEEE Trans. Aerosp. Electron. Syst.}, 
  title={Fast Genetic Algorithm Path Planner for Fixed-Wing Military {UAV} Using {GPU}}, 
  year={2018},
  volume={54},
  number={5},
  pages={2105-2117},
  doi={10.1109/TAES.2018.2807558}}

\end{document}